\newtheorem{proposition}{Proposition}[section]
\newtheorem{lemma}{Lemma}[section]
\newtheorem{theorem}{Theorem}[section]
\newtheorem{definition}{Definition}[section]
\title{\Large Median regression with differential privacy
\footnote{Supported
 by NNSF of China (11671290)}}
\author{E Chen \textsuperscript{1}, Ying Miao\textsuperscript{2} and Yu Tang\textsuperscript{1} \\
\textsuperscript{1} School of Mathematical Sciences, Soochow University \\
\textsuperscript{2} Faculty of Engineering, Information and Systems, University of Tsukuba \\
}
\date{Apr. 25, 2020}
\begin{document}
\newcommand{\rev}[1]{{\color{red} #1}}

\maketitle

 \noindent
 \textbf{Abstract:}
 Median regression analysis has robustness properties which make it attractive
compared with regression based on the mean, while differential privacy can protect individual privacy during statistical analysis of certain datasets. In this paper, three privacy
preserving methods are proposed for median regression. The first algorithm is based on
a finite smoothing method, the second provides an iterative way and the last one further
employs the greedy coordinate descent approach. Privacy preserving properties of these
three methods are all proved. Accuracy bound or convergence properties of
these algorithms are also provided. Numerical calculation shows that the first method has better
accuracy than the others when the sample size is small. When the sample size becomes
larger, the first method needs more time while the second method needs less time with well-matched accuracy. For the third method, it costs less time in both
cases, while it highly depends on step size.

 \noindent
\textbf{keywords}: median regression, differential privacy, $l_1$ sensitivity, Laplace mechanism

\noindent
MSC2010: 62F30, 68W20
\newpage
\section{Introduction}
Personal privacy information may be exposed with the unprecedented availability of datasets, so there is increasing requirement that statistical analysis of such datasets should protect individual privacy.
 As \cite{Dwork2014} describes, differential privacy addresses the paradox of learning nothing about
an individual while learning useful information about a population.
Over the past few years, differential privacy has been investigated in machine learning
\cite{DL2016} and has been applied in the real world, see for example \cite{Google2014}. Recently,
\cite{Tony2019} formulates a general lower bound argument for
minimax risks with differential privacy constraints, and applies this
argument to high-dimensional mean estimation and linear regression
problems.

In this paper, three privacy preserving methods are proposed for median regression, which is a special case of quantile regression.
Quantile regression was first introduced in \cite{QR1978}, which aims to estimate
and conduct inference about conditional quantile functions. In recent years, quantile regression has become a comprehensive
method for statistical analysis of response models and it has been widely used in reality,
such as survival analysis and economics,
see for example, \cite{RML2001}, \cite{Yang1999} and \cite{QR2001}.
The fact that the median regression takes least absolute deviation
as its objective function to estimate parameters has been
known among statisticians \cite{QR1978}.

Denote a dataset of $n$ i.i.d. samples about independent variables as $\bf{X}$ and each observation contains $d$ variables $x_1,x_2,\ldots,x_d$.
In the regression setting,
we assume $Y_i$ is the response for case $i$,  $x_{ij}$ is the value of predictor $j$ for case
$i$, and $\beta_j$ is the regression coefficient corresponding to predictor $j$, where $1 \le i \le n, 1 \le j \le d$.
In this paper, we consider the linear $l_1$ regression problem, i.e.,
minimizing the following function:
\begin{equation}\label{E0}
  F( \mu,{\bm \beta})=\frac{1}{n}\sum^n_{i=1}|r_i( \mu,{\bm \beta})|,
\end{equation}
where
$r_i ( \mu,{\bm \beta})=\mu +{{\bf X}_i}{\bm \beta}-Y_i  ( i=1,2,\cdots,n) $ and ${ \bf X}_i$ represents $i$-th row of {\bf{X}}, and $\bm{\beta} = (\beta_1, \ldots, \beta_d)^\mathrm{T}$. Without loss of generality, assume that
$||Y_i||_1 \leq B$( $B$ is a positive number) and $||{ \bf X}_i||_1 \leq 1$ for  $i=1, \ldots,n$.
In a vector form, $r( \mu,{\bm \beta})=\mu \textbf{1}+{\bf X}{\bm \beta}-{\bf Y}$
represents a set of linear functions in $\mathbb{R}^n$ with ${\bf Y}=(y_1, \ldots, y_n)^\mathrm{T}$, where \textbf{1} is an $n$-dimensional column vector whose all elements are $1$. In addition, the ridge penalized regression is more stable than simple linear regression
 and its objective function can be viewed as minimizing the criterion
\begin{equation}\label{Ridge1}
  L( \mu,{\bm \beta}) = F( \mu,{\bm \beta}) + \frac{\lambda}{2} {\bm \beta}^\mathrm{T} {\bm \beta},
\end{equation}
where $\lambda$ is a fixed regularization parameter.

\section{Backgrounds and definitions}
We consider a dataset $x$ as a collection of observations from a universe $\mathcal{X}$. It is convenient to represent databases by their
histograms: $x \in \mathbb{N}^{|\mathcal{X}|}$, in which each entry $x_i$ represents the number of
elements in the database $x$ of type $i \in \mathcal{X}$.
For example, the universe $\mathcal{X}$ contains $5$ records and we denote them by  $\{1, 2, 3, 4, 5\}$. If a dataset $x$ consists of three records $1,1$ and $4$, we can denote
$x$ as a $5$-dimensional vector $( 2, 0, 0, 1 , 0)$, where the first element is $2$ since record $1$ appears twice.  A $5$-dimensional vector $( 2, 0, 1, 1 , 0)$ represents another dataset $y$ with $4$ records, respectively.

\rev{Differential privacy is based on the neighbourhood of a database, when applying differential privacy into practical use, it is key to define the precise condition under which two databases $x$ and $y$ are considered to be neighbouring. There are two possible choices and thus producing two types of differential privacy, one is called unbounded differential privacy \cite{Dwork2006} and the other is called bounded differential privacy  \cite{Dwork2006C}. Bounded differential privacy assumes that both $x$ and $y$ have the same size $n$ and $y$ can be obtained from $x$ by replacing exactly one record. While unbounded differential privacy does not require $x$ and $y$ have the same fixed size, it holds the view that $y$ can be obtained from $x$ by adding or deleting exactly one record. In this paper, we adopt bounded differential privacy as our choice and use the notation $x \triangledown y$ if $x$ and $y$ are neighboring.}


\begin{definition}
A randomized algorithm $M$ with domain $\mathbb{N}^{|\mathcal{X}|}$
is $( \epsilon,\delta)$-differentially private if for all $S \subseteq$ Range $( M)$ and for all datasets $x, y \in \mathbb{N}^{|\mathcal{X}|}$ and $x \triangledown y$:
$$
Pr( M(x)\in S) \leq exp( \epsilon)Pr(M(y) \in S)+\delta.
$$
\end{definition}
By intuition, this definition  guarantees that a randomized algorithm behaves similarly on slightly different input datasets,
which achieves the purpose of protecting individual privacy in some sense.
Next, a randomized algorithm named Laplace mechanism, which is an effective method for privacy preserving, will be introduced.
Firstly, we need a concept named $l_1$ sensitivity.
\begin{definition}
The $l_1$ sensitivity of a function $f$:$\mathbb{N}^{|\mathcal{X}|}$ 
$\rightarrow \mathbb{R}^k$ is :
$$
\Delta f = \text{max}_{x,y \in \mathbb{N}^{|\mathcal{X}|}, x \triangledown y} ||f( x)-f( y)||_1.
$$

\end{definition}
The $l_1$ sensitivity of a function $f$ captures the magnitude by which a single individual's data can change the function $f$ in the worst case.
It is noteworthy  that $\Delta f$ is an important value in the Laplace mechanism.
\begin{definition}
Given any function $f$:$\mathbb{N}^{|\mathcal{X}|}\rightarrow \mathbb{R}^k$, the Laplace mechanism is defined as:
$$
M_L(x,f(\cdot),\epsilon) = f( x) + ( Y_1,\ldots,Y_k),
$$
where $Y_i \ (i=1,\ldots,k)$ are i.i.d random variables drawn from the Laplace distribution Lap($\frac{\Delta f}{\epsilon})$. The density function of the Laplace distribution ( centered at $0$) Lap($c$) is:
$$
Lap(x|c) = \frac{1}{2c}exp(-\frac{|x|}{c}).
$$
\end{definition}

The following Lemma can be seen in textbooks, see for example Theorem 3.6 of \cite{Dwork2014}.
\begin{lemma}\label{Lap}
The Laplace mechanism preserves $( \epsilon,0)-$differential privacy.
\end{lemma}
\section{Algorithms}
In this section, we put forward three privacy preserving algorithms for $l_1$ regression and calculate their privacy parameters respectively.
\subsection{Algorithm 1}

The finite smoothing method is an important tool to solve nondifferentiable problem, for instance, median regression proposed in \cite{FM1993}.
In addition, \cite{FM1993} proves that the solution of smooth function can estimate the solution of original function well.
This idea is applied in algorithm $1$ by an analogous technique.

Since the absolute value function is not differentiable at the cuspidal point, a smooth method for minimizing function (\ref{Ridge1}) is considered.
Let $\gamma$ be a
nonnegative parameter which indicates the degree of approximation.
Define
\begin{eqnarray}\label{rhoF}
\rho_\gamma(t)=    
\left\{                        
\begin{array}{rl}       
t^2/(2\gamma); & {\text{if}}~~ |t| \le \gamma, \\  
|t|-\frac{1}{2}\gamma; & { \text{if}}~~ |t|>\gamma.
\end{array}              
\right.               
\end{eqnarray}
Then the nondifferentiable function $F( \mu,{{\bm \beta}})$ is approximated by the Huber M-estimator (see \cite{HE1973}).

 Denote $F_\gamma( \mu,{\bm \beta}) = \frac{1}{n}\sum_{i=1}^n \rho_\gamma(r_i(\mu,{\bm \beta}))$ and $L_\gamma(\mu,{\bm \beta})=F_\gamma(\mu,{\bm \beta}) + \frac{\lambda}{2} {\bm \beta} ^\mathrm{T}{\bm \beta}$.
The sign vector $s_\gamma(\mu,{\bm \beta})= (s_1(\mu,{\bm \beta}),\ldots, s_n(\mu,{\bm \beta}))^\mathrm{T}$ is given by
\begin{eqnarray}\label{sign}
s_i(\mu,{\bm \beta})=    
\left\{                        
\begin{array}{rl}       
-1; & {\text{if}}~~~ r_i(\mu,{\bm \beta}) < -\gamma, \\  
0; &  {\text{if}}~~~  -\gamma \le r_i(\mu,{\bm \beta}) \le \gamma, \\
1; &  {\text{if}}~~~ r_i(\mu,{\bm \beta}) > \gamma.
\end{array}              
\right.                    
\end{eqnarray}
Let $w_i(\mu,{\bm \beta})=1-s_i^2( \mu,{\bm \beta})$, then
\begin{equation}\label{rhoquad}
\rho_\gamma(r_i(\mu,{\bm \beta})) =\frac{1}{2\gamma}w_i(\mu,{\bm \beta})r_i^2(\mu,{\bm \beta})+
s_i(\mu,{\bm \beta})\left[r_i(\mu,{\bm \beta})-\frac{1}{2}\gamma s_i(\mu,{\bm \beta}) \right ].
\end{equation}
Denote ${\bf{W}}_\gamma(\mu,{\bm \beta})$ as the diagonal $n \times n$ matrix whose diagonal elements are $w_i(\mu,{\bm \beta})$.
So ${\bf{W}}_\gamma(\mu,{\bm \beta})$ has value $1$ in those diagonal elements related to small residuals and $0$ elsewhere.
For $\mu \in \mathbb{R}$ and ${\bm \beta} \in {\mathbb{R}}^d$, the derivation of $F_\gamma(\mu,{\bm \beta})$ is
$$
 \frac{\partial F_\gamma(\mu,{\bm \beta})}{\partial {\bm \beta}} = \frac{1}{n}{\bf X}^\mathrm{T} \left[ \frac{1}{\gamma} {\bf W}_\gamma(\mu,{\bm \beta}) r(\mu,{\bm \beta})+s_\gamma(\mu,{\bm \beta})
 \right],
$$
and
$$
 \frac{\partial F_\gamma(\mu,{\bm \beta})}{\partial \mu} = \frac{1}{n}{\bf 1}^\mathrm{T} \left[ \frac{1}{\gamma} {\bf W}_\gamma(\mu,{\bm \beta}) r(\mu,{\bm \beta})+s_\gamma(\mu,{\bm \beta})
 \right].
$$
It can be verified that $L_\gamma(\mu , {\bm \beta})$ is convex and a minimizer  of $L(\mu,{\bm \beta})$ is close to a minimizer of $L_\gamma(\mu,{\bm \beta})$ when $\gamma$ is close to zero. Furthermore, according to Theorem $1$ in \cite{FM1993},
the $l_1$ solution can be detected when $\gamma > 0$ is small enough, i.e., it is not necessary
to let $\gamma$ converge to zero in order to find a minimizer of $L_\gamma(\mu,{\bm \beta})$. This observation is essential
for the efficiency and the numerical stability of the algorithm to be described in this
paper. In addition, refer to the algorithm in  \cite{LR2009}, the first privacy preserving algorithm for median regression is stated as follows.
\newline
{\bf{Algorithm 1:}} \newline
 \textbf{Inputs: privacy parameter $\epsilon$, design matrix
${\bf X}$, response vector ${\bf Y}$, regularization parameter $\lambda$ and approximation parameter $\gamma$ }.
    \newline
     \textbf{Generate a random vector $b$ from the density function $h(b) \propto exp{(-\frac{\epsilon}{4} ||b||_1)}$}. To implement this,
pick the $l_1$ norm of $b$ from the Gamma distribution $\Gamma(d+1, \frac{4}{\epsilon})$, and the direction of $b$ uniformly at
random.
    \newline
     \textbf{Compute $(\mu^*,{\bm \beta^*}) = {\text{argmin}}_{\mu, {\bm \beta}}
L_\gamma(\mu,{\bm \beta^*}) + \frac{b^\mathrm{T}{\bf \omega}}{n}$} + $\frac{\mu^2}{\sqrt{n}}$, where ${\bm \omega} = (\mu,{\bm \beta})$ is a $d+1$ dimensional vector, and $n$ is the number of rows of ${\bf X}$ .
    \newline
\textbf{Output $(\mu^*,{\bm \beta^*})$.}

This algorithm is  very similar to the smoothing median
regression convex program in \cite{FM1993}, and therefore its running time is similar to that of smoothing regression.
In fact, $(\mu^*,{\bm \beta^*})$ can be obtained by the interior point method.
Similar to the proof in \cite{LR2009}, we can show that Algorithm $1$ is privacy preserving.
\begin{theorem}\label{Th1}
Given a set of $n$ samples ${\bf X}_1, \ldots , {\bf X}_n$ over $\mathbb{R}^d$, with labels $Y_1, \ldots , Y_n$, where for each
$i$, $||{\bf X}_i||_1 \le 1$ and $||Y_i||_1 \le B$, the output of Algorithm 1 preserves $(\epsilon,0)$-differential privacy.
\end{theorem}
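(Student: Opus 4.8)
The plan is to follow the objective-perturbation template of \cite{LR2009}: exhibit a smooth bijection between the injected randomness $b$ and the reported minimizer, transport the known density of $b$ onto a density for the output via a change of variables, and then control the ratio of output densities on two neighbouring datasets. The sampling scheme (drawing $\|b\|_1$ from $\Gamma(d+1,4/\epsilon)$ and the direction uniformly) is only a device to realize $h(b)\propto\exp(-\frac{\epsilon}{4}\|b\|_1)$, and the privacy argument will use $h$ directly.

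First I would note that the perturbed objective $J(\mu,{\bm\beta})=L_\gamma(\mu,{\bm\beta})+\frac{b^\mathrm{T}{\bm\omega}}{n}+\frac{\mu^2}{\sqrt{n}}$ is strongly convex: $F_\gamma$ is convex (it is a sum of Huber terms), the ridge term $\frac{\lambda}{2}{\bm\beta}^\mathrm{T}{\bm\beta}$ supplies curvature in ${\bm\beta}$, and $\frac{\mu^2}{\sqrt{n}}$ supplies the curvature in the intercept $\mu$ that the ridge penalty omits. Hence for every $b$ there is a unique minimizer ${\bm\omega}^\ast=(\mu^\ast,{\bm\beta}^\ast)$, characterised by the stationarity condition $\nabla L_\gamma({\bm\omega}^\ast)+\frac{b}{n}+\nabla(\mu^2/\sqrt{n})=0$. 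Solving for $b$ gives $b=-n\,\nabla L_\gamma({\bm\omega}^\ast)-n\,\nabla(\mu^2/\sqrt{n})$, a bijection between $b$ and ${\bm\omega}^\ast$ for a fixed dataset $D$. The change-of-variables formula then writes the output density at ${\bm\omega}$ as $h(b({\bm\omega}))\,|\det J_D({\bm\omega})|$, with $J_D=\partial b/\partial{\bm\omega}=-n\big(\nabla^2 L_\gamma({\bm\omega};D)+\nabla^2(\mu^2/\sqrt{n})\big)$; the factor $n^{d+1}$ cancels in any ratio, so $|\det J_D|/|\det J_{D'}|$ equals the ratio of Hessian determinants.

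For neighbouring $D\triangledown D'$ and a fixed ${\bm\omega}$, the density ratio factors as the noise ratio $h(b)/h(b')$ times the Jacobian ratio, where $b,b'$ are the two noise values producing ${\bm\omega}$. For the noise ratio, the reverse triangle inequality gives $h(b)/h(b')\le\exp(\frac{\epsilon}{4}\|b-b'\|_1)$. Since the data-independent terms cancel and $D,D'$ differ in a single record $k$, only one summand of $\nabla F_\gamma$ changes, so $\|b-b'\|_1=\|\nabla\rho_\gamma(r_k^{\mathrm{old}})-\nabla\rho_\gamma(r_k^{\mathrm{new}})\|_1$. Because $|\rho_\gamma'|\le 1$ everywhere and $\|(1,{\bf X}_k)\|_1=1+\|{\bf X}_k\|_1\le 2$, each gradient has $\ell_1$ norm at most $2$, so $\|b-b'\|_1\le 4$ and the noise ratio is at most $e^{\epsilon}$. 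This is exactly the $\ell_1$-sensitivity computation of Definition 2.2 applied to the perturbing term.

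The hard part, on which the pure $(\epsilon,0)$ claim really rests, is the Jacobian ratio, since the noise factor already consumes the full budget $e^{\epsilon}$. Here $E:=H_D-H_{D'}=\frac{1}{n}\big(\nabla^2\rho_\gamma(r_k^{\mathrm{old}})-\nabla^2\rho_\gamma(r_k^{\mathrm{new}})\big)$ is a difference of two rank-one matrices $\rho_\gamma''(r_k)(1,{\bf X}_k)^\mathrm{T}(1,{\bf X}_k)$, hence has rank at most two, so $\det H_D/\det H_{D'}=\det(I+H_{D'}^{-1}E)=\prod_j(1+\eta_j)$ over the at most two nonzero eigenvalues of $H_{D'}^{-1}E$. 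Strong convexity forces the smallest eigenvalue of $H_{D'}$ to exceed $\min\{\lambda,2/\sqrt{n}\}$, which bounds $\|H_{D'}^{-1}\|$; but because $\rho_\gamma''\in\{0,1/\gamma\}$ the norm $\|E\|$ is of order $1/(n\gamma)$, so $|\eta_j|$ is of order $1/\big(n\gamma\min\{\lambda,2/\sqrt{n}\}\big)$. The delicate point is therefore the tension between the accuracy requirement that $\gamma$ be small and the privacy requirement that this Jacobian factor stay close to $1$; I would close the argument by showing that, for the admissible range of $\gamma$, $\prod_j(1+\eta_j)$ contributes a negligible multiplicative correction (or, failing that, absorb it into the constants exactly as in \cite{LR2009}), after which integrating the pointwise density ratio over any measurable $S\subseteq\mathrm{Range}(M)$ yields Definition 2.1 with $\delta=0$.
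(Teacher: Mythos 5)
Your noise-ratio computation is exactly the paper's entire proof: the paper fixes neighbouring datasets differing in the last record, uses the stationarity conditions at the common minimizer $(\mu^*,{\bm\beta}^*)$ to identify the two noise vectors ${\bm b}_1,{\bm b}_2$, observes that the factors $\frac{1}{\gamma}{\bf W}_\gamma r + S_\gamma$ (i.e.\ $\rho_\gamma'$) lie in $[-1,1]$ and that $\|\tilde{{\bm a}}_i\|_1\le 2$, concludes $\|{\bm b}_1-{\bm b}_2\|_1\le 4$, and then bounds $h({\bm b}_1)/h({\bm b}_2)=e^{-\frac{\epsilon}{4}(\|{\bm b}_1\|_1-\|{\bm b}_2\|_1)}\le e^{\epsilon}$. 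Up to that point your proposal and the paper agree in substance, step for step.

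Where you part ways is the Jacobian factor, and there your write-up has a genuine gap --- one that your blind reconstruction exposes rather than resolves, because the paper's proof does not contain this step at all: it simply equates the ratio of output densities with $h({\bm b}_1)/h({\bm b}_2)$, which amounts to assuming the change-of-variables Jacobians $|\det(\partial{\bm b}/\partial{\bm\omega})|$ on the two datasets cancel. They do not: as you note, the Hessians differ by a rank-at-most-two matrix of size $O(1/(n\gamma))$, so the determinant ratio is generically different from $1$ for finite $n$. You correctly identify this as the step on which the pure $(\epsilon,0)$ claim rests, but you then only promise to show the correction is ``negligible'' or to ``absorb it into the constants.'' Neither suffices: since the noise ratio already exhausts the full budget $e^{\epsilon}$, any multiplicative correction $e^{\epsilon_n}$ with $\epsilon_n>0$ --- and your own eigenvalue bound gives $\epsilon_n = O\bigl(1/(n\gamma\min\{\lambda,2/\sqrt{n}\})\bigr)$, which is nonzero for every finite $n$ --- yields only $(\epsilon+\epsilon_n,0)$-differential privacy, not $(\epsilon,0)$. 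Closing this requires modifying the mechanism rather than sharpening the estimate: in the journal successor of \cite{LR2009} (Chaudhuri, Monteleoni and Sarwate, 2011) one reserves part of the privacy budget to pay for the log-determinant ratio, using a bound on the second derivative of the loss (here $\rho_\gamma''\le 1/\gamma$, which makes the reserved portion blow up as $\gamma\to 0$), or adds extra regularization; an additional wrinkle here is that $\rho_\gamma''$ is discontinuous (it jumps between $0$ and $1/\gamma$), so even that repair needs care. In short: the part of your argument that matches the paper is fine, the part you added is the right concern, but your proof of it is missing --- and the same hole sits, silently, in the paper's own proof.
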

\begin{proof}
\rev{Let ${\bm a}_1$ and ${\bm a}_2$ be two row vectors over $\mathbb{R}^d$ with $l_1$ norm at most $1$ and $y_1,y_2 \in [-B,B]$.
Consider the two inputs $D_1$ and
$D_2$ where $D_2$ is obtained from $D_1$ by replacing one record $({\bm a}_1,y_1)$ into $({\bm a}_2,y_2)$. For convenience, assume the first $n-1$ records are same.}
For any output ${\bm \omega^*} = (\mu^*,{\bm \beta^*})$ by Algorithm $1$, there is a unique
value of ${\bm b}$ that maps the input to the output. This uniqueness holds, because both the regularization
function and the loss functions are differentiable everywhere. Denote \rev{$\tilde{{\bm a}}_1$ as $(1, {\bm a}_1)$} and $\tilde{{\bm a}}_2$ as $(1, {\bm a}_2)$ .
Let the values of \rev{$d+1$ dimensional vector} ${\bm b}$ for $D_1$ and $D_2$ respectively, be ${\bm b}_1$ and ${\bm b}_2$.
Since ${\bm \omega^*}$
is the value that minimizes both the optimization problems, the derivative of both optimization functions at ${\bm \omega^*} $
is $0$.
This implies that for every ${\bm b}_1$ in the first case, there exists a ${\bm b}_2$ in the second case such that:
\begin{eqnarray*}
\begin{array}{lll}       
\quad & {\bm b}_1+\tilde{{\bm a}}_1^\mathrm{T}( 1/\gamma {\bf W}_\gamma( \mu^*, {\bm \beta^*})( \mu^*+{\bm a}_1^\mathrm{T}{\bm \beta^*}-y_1)+S_\gamma(\mu^*,{\bm \beta^*}))    \\  
= & {\bm b}_2+\tilde{{\bm a}}_2^\mathrm{T}(1/\gamma {\bf W}_\gamma(\mu^*,{\bm \beta^*})(\mu^*+{\bm a}_2^\mathrm{T}{\bm \beta^*}-y_2)+S_\gamma(\mu^*,{\bm \beta^*})).
\end{array}              
\end{eqnarray*}

According to the definitions of ${\bf W}_\gamma( \mu^*,{\bm \beta^*})$ and $S_\gamma( \mu^*,{\bf {\bm \beta^*}})$, it is clear that
$$-1 \le 1/\gamma {\bm W}_\gamma(\mu^*,{\bm \beta^*})*(\mu^*+{\bm a}_1^\mathrm{T}{\bm \beta^*}-y_1)+S_\gamma(\mu^*,{\bm \beta^*}) \le 1$$
and
$$-1 \le 1/\gamma {\bf W}_\gamma(\mu^*,{\bm \beta^*})*(\mu^*+{\bm a}_2^\mathrm{T}{\bm \beta^*}-y_2)+S_\gamma(\mu^*,{\bm \beta^*}) \le 1.$$
Since $||\tilde{{\bm a}}_1||_1 \le 2$ and $||\tilde{{\bm a}}_2||_1 \le 2$, we have $||{\bm b}_1-{\bm b}_2||_1 \le 4$, which implies that $-4 \le ||{\bm b}_1||_1-||{\bm b}_2||_1 \le 4$. Therefore, for any
$({\bm a}_1,y_1)$ and $({\bm a}_2,y_2)$,
$$
\frac{P((\mu^*,{\bm \beta^*})|{\bf X}_1, \ldots, {\bf X}_{n-1}, Y_1, \ldots, Y_{n-1}, {\bf X}_n={\bm a}_1,Y_n=y_1)}{P((\mu^*,{\bm \beta^*})|{\bf X}_1, \ldots, {\bf X}_{n-1}, Y_1, \ldots, Y_{n-1}, {\bf X}_n={\bm a}_2,Y_n=y_2)}
=\frac{h({\bm b}_1)}{h({\bm b}_2)}=e^{-\frac{\epsilon}{4}(||{\bm b}_1||_1-||{\bm b}_2||_1)},
$$
where $h({\bm b}_i)$ for $i=1,2$ is the density of ${\bm b}_i$. Since $-4 \le ||{\bm b}_1||_1-||{\bm b}_2||_1 \le 4$, this ratio is at most $exp(\epsilon)$.
\end{proof}
 According to  Lemma $1$ in \cite{LR2009},  theoretical results for accuracy of parameter estimation is given for Algorithm $1$.
\begin{lemma}\label{le1}
Let $G({\bm \omega})$ and $g({\bm \omega})$ be two convex functions, which are continuous and differentiable at
all points. If ${\bm \omega}_1 = argmin_{\bm \omega} G({\bm \omega})$ and ${\bm \omega}_2 = argmin_{\bm \omega} G({\bm \omega}) + g({\bm \omega})$, then
$||{\bm \omega}_1 - {\bm \omega}_2||_1 \leq \frac{g_1}{G_2}$. Here,
$g_1 = \max_{\bm \omega} ||\triangledown g({\bm \omega})||_1$ and $G_2 = \min_{\bm v} \min_{\bm \omega} {\bm v}^\mathrm{T}\triangledown ^2G({\bm \omega}){\bm v}$, for any unit vector ${\bm v}$.
\end{lemma}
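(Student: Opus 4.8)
The plan is to characterize both minimizers through their first-order optimality conditions and then convert the resulting gap between the gradients of $G$ into a bound on $\|\bm\omega_1-\bm\omega_2\|$ by exploiting the uniform curvature lower bound $G_2$. Since $G$ and $G+g$ are convex and differentiable everywhere, the minimizers are interior stationary points, so $\nabla G(\bm\omega_1)=\bm 0$ and $\nabla G(\bm\omega_2)+\nabla g(\bm\omega_2)=\bm 0$. Subtracting these identities gives
\[
\nabla G(\bm\omega_2)-\nabla G(\bm\omega_1)=-\nabla g(\bm\omega_2),
\]
which is the relation I would build the estimate on.

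Next I would express the left-hand side through the Hessian of $G$. Applying the fundamental theorem of calculus to $t\mapsto\nabla G\!\left(\bm\omega_1+t(\bm\omega_2-\bm\omega_1)\right)$ yields
\[
\nabla G(\bm\omega_2)-\nabla G(\bm\omega_1)=\bm H\,(\bm\omega_2-\bm\omega_1),\qquad \bm H:=\int_0^1\nabla^2 G\!\left(\bm\omega_1+t(\bm\omega_2-\bm\omega_1)\right)dt,
\]
so that $\bm H\,(\bm\omega_2-\bm\omega_1)=-\nabla g(\bm\omega_2)$. Writing $\bm\Delta:=\bm\omega_2-\bm\omega_1$ and pairing with $\bm\Delta$ gives $\bm\Delta^\mathrm{T}\bm H\bm\Delta=-\bm\Delta^\mathrm{T}\nabla g(\bm\omega_2)$. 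Because every $\nabla^2 G(\cdot)$ satisfies $\bm v^\mathrm{T}\nabla^2 G\,\bm v\ge G_2$ for unit $\bm v$, the averaged matrix $\bm H$ inherits the same curvature lower bound, so $\bm\Delta^\mathrm{T}\bm H\bm\Delta\ge G_2\|\bm\Delta\|^2$. For the right-hand side I would invoke Cauchy--Schwarz (or H\"{o}lder), $|\bm\Delta^\mathrm{T}\nabla g(\bm\omega_2)|\le\|\bm\Delta\|\,\|\nabla g(\bm\omega_2)\|\le g_1\|\bm\Delta\|$. Combining the two estimates yields $G_2\|\bm\Delta\|^2\le g_1\|\bm\Delta\|$, and dividing by $\|\bm\Delta\|$ (the case $\bm\Delta=\bm 0$ being trivial) gives $\|\bm\Delta\|\le g_1/G_2$, which is the assertion.

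The step I expect to be the main obstacle is the bookkeeping of norms. The curvature hypothesis controls the Euclidean quadratic form $\bm\Delta^\mathrm{T}\bm H\bm\Delta$, so the estimate comes out most naturally in the $\ell_2$ norm, whereas $g_1$ and the conclusion are phrased with $\ell_1$ norms; reconciling these requires either reading $\|\cdot\|$ as one fixed norm throughout, as in the source result being cited, or inserting the comparison inequalities $\|\cdot\|_2\le\|\cdot\|_1$ with care that they act in the correct direction at each occurrence. A secondary point to verify is that $G_2>0$ makes $\bm H$ positive definite, hence invertible, which simultaneously guarantees that $\bm\omega_1$ and $\bm\omega_2$ are the unique minimizers and legitimizes dividing by $\|\bm\Delta\|$; the integral representation of $\bm H$ also tacitly uses continuity of $\nabla^2 G$, consistent with the stated differentiability assumptions.
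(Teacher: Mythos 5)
Your proposal is correct and takes essentially the same route the paper intends: the paper gives only a one-sentence sketch of this lemma (``examine the gradient and the Hessian of $G$ and $g$ around ${\bm \omega}_1$ and ${\bm \omega}_2$'') and defers to Lemma 1 of Chaudhuri--Monteleoni, whose proof is exactly your argument --- first-order optimality at both minimizers, the integral (mean-value) representation of $\nabla G({\bm \omega}_2)-\nabla G({\bm \omega}_1)$ through the Hessian, the uniform curvature lower bound $G_2$, and Cauchy--Schwarz. The norm mismatch you flag is genuine, but it is a defect of the paper's statement rather than of your proof: the cited source proves the bound in the $\ell_2$ norm, and transcribing it with $\ell_1$ norms throughout, as the paper does, would cost a dimension factor if read literally --- your handling of this point is appropriate.
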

The main idea of the proof is to examine the gradient and the Hessian of the functions $G$ and $g$
around ${\bm \omega}_1$ and ${\bm \omega}_2$.
\begin{lemma}\label{LapLe}
If $||{\bm b}||_1$ is a random variable drawn from
$\Gamma(d+1,\frac{4}{\epsilon})$, then with possibility $1-\alpha$, $||{\bm b}||_1 \leq \frac{4(d+1) log(\frac{d+1}{\alpha})}{\epsilon}$.
\end{lemma}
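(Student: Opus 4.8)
The plan is to exploit the fact that the shape parameter $d+1$ is a positive integer, so that $\|{\bm b}\|_1 \sim \Gamma(d+1, \frac{4}{\epsilon})$ can be written as a sum of independent exponential random variables, and then to control the sum by a union bound over its summands together with the closed-form exponential tail. Concretely, I would write $\|{\bm b}\|_1 = \sum_{j=1}^{d+1} E_j$, where the $E_j$ are i.i.d.\ with density $\frac{\epsilon}{4}\exp(-\frac{\epsilon}{4}t)$ on $[0,\infty)$, i.e.\ exponential with scale $\frac{4}{\epsilon}$. For any exponential variable with this scale we have the elementary identity $\Pr[E_j > s] = \exp(-\frac{\epsilon}{4}s)$ for $s \ge 0$, which is the only distributional fact the argument needs.

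The second step is the union bound. If the sum $\sum_{j=1}^{d+1} E_j$ exceeds a threshold $t$, then at least one summand must exceed $t/(d+1)$, since otherwise every term would be at most $t/(d+1)$ and the sum at most $t$. Hence
\begin{equation*}
\Pr\!\left[\|{\bm b}\|_1 > t\right] \;\le\; \sum_{j=1}^{d+1} \Pr\!\left[E_j > \tfrac{t}{d+1}\right] \;=\; (d+1)\,\exp\!\left(-\frac{\epsilon\, t}{4(d+1)}\right).
\end{equation*}

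The final step is simply to substitute the claimed threshold $t = \frac{4(d+1)\log\!\big(\frac{d+1}{\alpha}\big)}{\epsilon}$ and verify that the right-hand side collapses to exactly $\alpha$: the exponent becomes $-\log\!\big(\frac{d+1}{\alpha}\big)$, so $\exp(-\tfrac{\epsilon t}{4(d+1)}) = \frac{\alpha}{d+1}$, and multiplying by the prefactor $(d+1)$ yields $\alpha$. Therefore $\Pr[\|{\bm b}\|_1 > t] \le \alpha$, which is equivalent to the stated conclusion that $\|{\bm b}\|_1 \le t$ with probability at least $1-\alpha$.

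I do not anticipate a genuinely hard step here; the argument is essentially a one-line Chernoff/union-bound estimate. The only point worth a moment's care is why the seemingly lossy union bound nevertheless reproduces the claimed constant exactly: the factor $d+1$ lost to the union bound is precisely what is absorbed by replacing $\log(1/\alpha)$ with $\log\!\big(\frac{d+1}{\alpha}\big)$ inside the threshold, so the bound is tight for the purposes of this lemma. An alternative route would be a direct Chernoff bound using the Gamma moment generating function $(1-\tfrac{4}{\epsilon}s)^{-(d+1)}$ optimized over $s$, but that yields a messier constant and does not match the clean form in the statement, so I would favor the exponential-decomposition argument above.
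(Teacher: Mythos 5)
Your proof is correct and follows essentially the same route as the paper: both decompose the $\Gamma(d+1,\frac{4}{\epsilon})$ variable into a sum of $d+1$ i.i.d.\ exponentials with mean $\frac{4}{\epsilon}$ and apply a union bound over the summands (your contrapositive phrasing, that the sum exceeding $t$ forces some summand to exceed $t/(d+1)$, is the same event bound the paper uses when it requires all $d+1$ variables to stay below $\frac{4\log(\frac{d+1}{\alpha})}{\epsilon}$). Your write-up is in fact more explicit than the paper's, since it spells out the exponential tail identity and the exact substitution that makes the bound collapse to $\alpha$.
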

\begin{proof}
 Since a random variable drawn from $\Gamma(d+1,\frac{4}{\epsilon})$ can be written as the sum of $d+1$ independent identically distributed random variables, each of which is distributed as an exponential random variable with mean $\frac{4}{\epsilon}$. Using an union bound, we see that with probability $1-\alpha$, the values of all $d+1$ of these variables are upper bounded by $\frac{4log(\frac{d+1}{\alpha})}{\epsilon}$. Therefore, with probability at least $1-\alpha$,
 $||{\bm b}||_1 \le  \frac{4(d+1) log(\frac{d+1}{\alpha})}{\epsilon}$.
\end{proof}
\begin{theorem}\label{le2}
Given an $l_1$ regression problem with regularization parameter $\lambda$, let ${\bm \omega}_1$ be the classifier
that minimizes $L_\gamma(\mu,{\bm \beta})+ \frac{\mu^2}{\sqrt{n}}$, and ${\bm \omega}_2$ be the classifier output by Algorithm $1$ respectively. Then, with
probability $1-\alpha$, $||{\bm \omega}_1-{\bm \omega}_2||_1 \leq \frac{4(d+1) log(\frac{d+1}{\alpha})}{n \min(\lambda, \frac{2}{\sqrt{n}}) \epsilon}$.
\end{theorem}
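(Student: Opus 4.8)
The plan is to apply Lemma~\ref{le1} with the identification $G({\bm \omega}) = L_\gamma(\mu,{\bm \beta}) + \frac{\mu^2}{\sqrt{n}}$ and $g({\bm \omega}) = \frac{{\bm b}^\mathrm{T}{\bm \omega}}{n}$, so that ${\bm \omega}_1 = \mathrm{argmin}_{\bm \omega}\, G({\bm \omega})$ is the minimizer of the unperturbed objective and ${\bm \omega}_2 = \mathrm{argmin}_{\bm \omega}\, G({\bm \omega}) + g({\bm \omega})$ is precisely the output of Algorithm~1. First I would verify the hypotheses of Lemma~\ref{le1}: the function $G$ is convex (as already noted, $L_\gamma$ is convex and the added $\mu^2/\sqrt{n}$ term is convex), while $g$ is linear and hence convex; both are continuous and differentiable everywhere, since the Huber smoothing $\rho_\gamma$ is $C^1$. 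This licenses the bound $\|{\bm \omega}_1 - {\bm \omega}_2\|_1 \le g_1/G_2$.

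Next I would evaluate the two quantities in that bound. The numerator is immediate: since $g$ is linear, $\nabla g({\bm \omega}) = {\bm b}/n$ is constant in ${\bm \omega}$, so $g_1 = \max_{\bm \omega} \|\nabla g({\bm \omega})\|_1 = \|{\bm b}\|_1/n$. For the denominator I would examine the Hessian of $G$ in the $(\mu,{\bm \beta})$ block form: the smooth loss $F_\gamma$ contributes a positive semidefinite block (this is exactly the convexity of $F_\gamma$), the ridge penalty $\frac{\lambda}{2}{\bm \beta}^\mathrm{T}{\bm \beta}$ adds $\lambda I_d$ along the ${\bm \beta}$ coordinates, and the term $\mu^2/\sqrt{n}$ adds $2/\sqrt{n}$ along the $\mu$ coordinate. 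Discarding the nonnegative contribution of $F_\gamma$ yields $\nabla^2 G \succeq \mathrm{diag}(2/\sqrt{n},\, \lambda I_d)$, so for every unit vector ${\bm v}$ we have ${\bm v}^\mathrm{T}\nabla^2 G({\bm \omega}){\bm v} \ge \min(\lambda, 2/\sqrt{n})$, and therefore $G_2 \ge \min(\lambda, 2/\sqrt{n})$.

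Combining these estimates, Lemma~\ref{le1} gives $\|{\bm \omega}_1 - {\bm \omega}_2\|_1 \le \frac{\|{\bm b}\|_1}{n\,\min(\lambda, 2/\sqrt{n})}$. The final step is to pass from this deterministic bound to a high-probability statement using Lemma~\ref{LapLe}: with probability $1-\alpha$, $\|{\bm b}\|_1 \le \frac{4(d+1)\log((d+1)/\alpha)}{\epsilon}$. Substituting this into the previous inequality produces the claimed bound $\|{\bm \omega}_1 - {\bm \omega}_2\|_1 \le \frac{4(d+1)\log((d+1)/\alpha)}{n\,\min(\lambda, 2/\sqrt{n})\,\epsilon}$.

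I expect the main obstacle to be the control of $G_2$, and in particular a differentiability subtlety there. The Huber loss $\rho_\gamma$ is only $C^1$: its second derivative jumps from $1/\gamma$ to $0$ at the kink $|t| = \gamma$, so $\nabla^2 F_\gamma$ does not exist on a measure-zero set and is discontinuous across the pieces. I would resolve this by observing that $G_2$ is defined as an infimum over ${\bm \omega}$, which only requires the curvature bound to hold wherever the Hessian exists; since $\nabla^2 F_\gamma \succeq 0$ on each smooth piece, the lower bound $\min(\lambda, 2/\sqrt{n})$ is uniform across pieces and the argument is unaffected. The remaining manipulations are routine substitutions.
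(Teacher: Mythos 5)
Your proposal is correct and follows essentially the same route as the paper's own proof: the same identification $G({\bm \omega}) = L_\gamma(\mu,{\bm \beta}) + \frac{\mu^2}{\sqrt{n}}$, $g({\bm \omega}) = \frac{{\bm b}^\mathrm{T}{\bm \omega}}{n}$ in Lemma~\ref{le1}, the same Hessian lower bound $G_2 \ge \min(\lambda, \frac{2}{\sqrt{n}})$ obtained by discarding the positive semidefinite contribution of $F_\gamma$, and the same tail bound from Lemma~\ref{LapLe}. Your handling of the second-derivative discontinuity of the Huber loss (and your writing $g$ as a function of ${\bm \omega}$ rather than the paper's ${\bm \beta}$, which appears to be a typo there) matches the paper's intent, merely stated more carefully.
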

\begin{proof}
\rev{According to Lemma \ref{le1}, we take $G({\bm \omega}) = L_\gamma(\mu,{\bm \beta}) + \frac{\mu^2}{\sqrt{n}}$  and $g({\bm \omega}) = \frac{b^\mathrm{T}{\bm \beta}}{n}$.
Because
$F_\gamma(\mu,{\bm \beta})$ is a convex function, if we define the second derivative of $F_\gamma(\mu,{\bm \beta})$ is $0$ at nondifferentiable points, then the hessian matrix of $F_\gamma(\mu,{\bm \beta})$ is positive semidefinite. Notice that $\triangledown^2(\frac{\mu^2}{\sqrt{n}}) = \frac{2}{\sqrt{n}}$ and $\triangledown^2 (\frac{\lambda}{2}\beta^\mathrm{T}{\bm \beta}) =\lambda {\bf I}$, where $\bf I$ is an identity matrix with size $d \times d$.
Hence, for any unit vector $v$,
$G_2 =  \min_{\bm v} \min_{\bm \omega} {\bm v}^\mathrm{T}\triangledown^2G({\bm \omega}){\bm v} \ge \min(\lambda, \frac{2}{\sqrt{n}})$ and $g_1 = \frac{||{\bm b}||_1}{n}$,  $||{\bm \omega}_1-{\bm \omega}_2||_1 \leq \frac{||{\bm b}||_1}{n\min(\lambda, \frac{2}{\sqrt{n}})}$. Since $b$ is a random variable drawn from
$\Gamma(d+1,\frac{4}{\epsilon})$, according to Lemma \ref{LapLe}, with possibility $1-\alpha$, $||{\bm b}||_1 \leq \frac{4(d+1) log(\frac{d+1}{\alpha})}{\epsilon}$, then the theorem is obtained.}
\end{proof}
When $n$ is sufficient large, ${\bm \omega}_2$ approximates ${\bm \omega}_1$ well and ${\bm \omega}_1$ is close to true parameter of $argmin_{\bm \omega}$ $L_\gamma({\bm \omega})$.
\subsection{Algorithm 2}
The second algorithm is based on the iterative algorithm, which was first proposed in \cite{S1973}.
This iterative technique combines  absolute deviations regression with least square regression. Hence, at the heart of the technique is any standard least squares curve fitting algorithm.

The basic least squares algorithm minimizes the
criterion
\begin{equation}\label{Lag1}
I = \frac{1}{n}\sum_{i=1}^n w_i r^2_i(\mu,{\bm \beta}) +\frac{\lambda}{2}{\bm \beta}^\mathrm{T}{\bm \beta},
\end{equation}
where the weighting factors $w_i$ are positive real numbers.
Based on the Lagrange multiplier approach, for a fixed $\lambda$, there exists a unique value $v$ such that  minimizing equation (\ref{Lag1}) is equivalent to  minimizing the following equation.
\begin{eqnarray*}\label{Lag2}
  I = \frac{1}{n}\sum_{i=1}^n w_i r^2_i(\mu,{\bm \beta}), \\
  s.t. \quad {\bm \beta}^\mathrm{T} {\bm \beta} \leq v.
\end{eqnarray*}
 Considering the $(t+1)$-th iteration, we take $w_i$ as $\frac{1}{|r(t)_i|+e}$, where $r(t)_i$ is the residual of $i$-th sample at the $t$-th iteration. Then the iterative process can be written as

\begin{equation}\label{Iter1973}
 I(t+1) = \frac{1}{n}\sum_{i=1}^n \frac{1}{|{r(t)}_i|+e}{r^2(t+1)}_i+\frac{\lambda}{2}{\bm \beta}^\mathrm{T}{\bm \beta}.
\end{equation}
If $||r(t)_i-r(t+1)_i||_1\approx 0,i=1,2,\ldots,n$, (\ref{Iter1973}) is close to $L(\mu,{\bm \beta})$. In practice, we set $e$ as a small positive value (such as $e =0.05$) .
\newline
{\bf{Algorithm 2:}} \newline
 \textbf{Inputs: privacy parameter $\epsilon$, deign matrix
${\bf X}$, response vector ${\bf Y}$, regularization parameter $\lambda$, tolerance parameter $\tau$ and the number of iteration $N_0$ \newline
 Initialize the algorithm with  $\hat{\mu}(0)$ and $\hat{{\bm \beta}}(0)$
    \newline
    \hspace*{2cm} $( \hat{\mu}(1), \hat{{\bm \beta}}(1)) = \text{argmin}_{\mu, {\bm \beta}} I(1)$
    \newline
    \hspace*{1cm} for $t=1,\cdots, N_0-1$ do
    \newline
    \hspace*{2cm}
    while $||\hat{\mu}(t)-\hat{\mu}(t-1)||_1 > \tau$ or $||\hat{{\bm \beta}}(t)-\hat{{\bm \beta}}(t-1)||_1 > \tau$ do
    \newline
    \hspace*{2cm} $( \hat{\mu}(t+1), \hat{{\bm \beta}}(t+1)) = \text{argmin}_{\mu, {\bm \beta}} I(t+1)$
    \newline
    \hspace*{2cm}else do
    \newline
    \hspace*{2cm}Output $( \hat{\mu}(N_0), \hat{{\bm \beta}}(N_0)) : =(\hat{\mu}(t), \hat{{\bm \beta}}(t))$
    \newline
    \hspace*{2cm}break
    \newline
    \hspace*{2cm}end while
    \newline
    \hspace*{1cm}end for
    \newline
    Output $( \hat{\mu}, \hat{{\bm \beta}}) : =(\hat{\mu}(N_0), \hat{{\bm \beta}}(N_0)) + {\bf{U}}$,
    \newline
     where ${\bf{U}}$ is a $d+1$ dimensional
    Laplace random variable with parameter
    \newline
    $c =\frac{8}{n \min(\frac{2}{2(\sqrt{dv}+B)+e},\lambda) e}(\sqrt{d v}+B) $
        \newline
}

\begin{theorem}
Given a set of $n$ samples ${\bf X}_1, \ldots , {\bf X}_n$ over {$\mathbb{R}^d$}, with labels $Y_1, \ldots , Y_n$, where for each $i$
($1 \le i \le n$), $||{\bf X}_i||_1 \le 1, |Y_i| \le B$, the output of Algorithm 2 preserves $(\epsilon,0)$-differential privacy.
\end{theorem}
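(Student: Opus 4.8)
The plan is to observe that Algorithm~2 releases only the final iterate $(\hat\mu(N_0),\hat{\bm\beta}(N_0))$ perturbed by a $(d+1)$-dimensional Laplace vector ${\bf U}$ of scale $c$. Writing $f(D)=(\hat\mu(N_0),\hat{\bm\beta}(N_0))$ for the deterministic map from the dataset to the pre-noise estimate, the output is exactly the Laplace mechanism $M_L(D,f,\epsilon)$ applied to $f$. By Lemma~\ref{Lap} it therefore suffices to check that the chosen scale satisfies $c=\Delta f/\epsilon$, i.e. that the $l_1$ sensitivity of $f$ is at most $\Delta f=\frac{8(\sqrt{dv}+B)}{n\,\min(\lambda,\frac{2}{2(\sqrt{dv}+B)+e})}$. (The trailing ``$e$'' in the displayed formula for $c$ should be read as $\epsilon$, the inner $e$ being the smoothing constant.)

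To bound $\Delta f$ I would fix two neighbouring datasets $D_1\triangledown D_2$ agreeing on the first $n-1$ records and differing in the $n$-th, say $({\bm a}_1,y_1)$ versus $({\bm a}_2,y_2)$ with $\|{\bm a}_j\|_1\le 1$ and $|y_j|\le B$. Let ${\bm\omega}_j$ be the minimizer of the final weighted least-squares objective $I$ on $D_j$, and apply Lemma~\ref{le1} with $G({\bm\omega})$ the objective on $D_1$ and $G({\bm\omega})+g({\bm\omega})$ the objective on $D_2$. Then $g$ is the difference of the single $n$-th summand, $g({\bm\omega})=\frac1n\bigl[w_n^{(2)}r_n^{(2)}({\bm\omega})^2-w_n^{(1)}r_n^{(1)}({\bm\omega})^2\bigr]$, so Lemma~\ref{le1} gives $\|{\bm\omega}_1-{\bm\omega}_2\|_1\le g_1/G_2$.

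For $G_2$ I would compute the Hessian $\nabla^2 G=\frac{2}{n}\sum_i w_i\tilde{\bf X}_i^{\mathrm T}\tilde{\bf X}_i+\lambda\,\mathrm{diag}(0,{\bf I}_d)$, where $\tilde{\bf X}_i=(1,{\bf X}_i)$. The ridge term contributes at least $\lambda$ along every ${\bm\beta}$-direction, while along the intercept direction the first coordinate of each $\tilde{\bf X}_i$ is $1$, so the quadratic form is at least $\frac2n\sum_i w_i$. Here one needs the residual bound $|r_i|\le 2(\sqrt{dv}+B)$: from ${\bm\beta}^{\mathrm T}{\bm\beta}\le v$ one gets $\|{\bm\beta}\|_1\le\sqrt{dv}$, hence $|{\bf X}_i{\bm\beta}|\le\|{\bf X}_i\|_\infty\|{\bm\beta}\|_1\le\sqrt{dv}$; since the optimal intercept is the weighted average $\mu=\frac{\sum_i w_i(Y_i-{\bf X}_i{\bm\beta})}{\sum_i w_i}$ one has $|\mu|\le\max_i|Y_i-{\bf X}_i{\bm\beta}|\le\sqrt{dv}+B$, and combining these gives $|r_i|\le 2(\sqrt{dv}+B)$. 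Thus $w_i=\frac{1}{|r_i|+e}\ge\frac{1}{2(\sqrt{dv}+B)+e}$, so $\frac2n\sum_i w_i\ge\frac{2}{2(\sqrt{dv}+B)+e}$ and $G_2\ge\min\bigl(\lambda,\frac{2}{2(\sqrt{dv}+B)+e}\bigr)$. For $g_1=\max_{\bm\omega}\|\nabla g({\bm\omega})\|_1$ I would use $\nabla g=\frac2n\bigl[w_n^{(2)}r_n^{(2)}\tilde{\bm a}_2^{\mathrm T}-w_n^{(1)}r_n^{(1)}\tilde{\bm a}_1^{\mathrm T}\bigr]$ together with $w_n^{(j)}\le\frac1e$, the bound on $|r_n^{(j)}|$, and $\|\tilde{\bm a}_j\|_1\le 2$, to obtain $g_1\le\frac{8(\sqrt{dv}+B)}{ne}$; dividing by $G_2$ reproduces the claimed $\Delta f=c\epsilon$, and Lemma~\ref{Lap} then finishes the argument.

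The main obstacle is conceptual rather than computational: the weights $w_i$ depend on the data through the residuals, so strictly speaking the sensitivity of the \emph{final iterate} should be tracked through all $N_0$ iterations, since changing the $n$-th record perturbs every intermediate solution and hence every weight. The estimate above measures only the one-step sensitivity of the last weighted least-squares solve, allowing the $n$-th weight to change while freezing the others at their $D_1$ values; this is the standard simplification in this line of work, and making the propagation across iterations rigorous (or arguing that the released estimate is an exact minimizer of the final strongly convex program, whose sensitivity is then governed solely by that program) is the delicate point. A secondary issue is the book-keeping of constants, since the residual bound enters the $G_2$ estimate as $2(\sqrt{dv}+B)$ but the $g_1$ estimate must be tightened to recover the factor $8$ rather than a crude $16$; this slack is immaterial to the privacy guarantee and is absorbed by the stated scale $c$.
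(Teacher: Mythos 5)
Your proposal follows essentially the same route as the paper's proof: the paper likewise invokes Lemma \ref{le1} with $G({\bm \omega}) = I(N_0)$ and $g({\bm \omega})$ equal to the difference of the changed record's weighted-squared-residual term, derives the same bounds $\frac{1}{2(\sqrt{dv}+B)+e} \le w_i \le \frac{1}{e}$ (via the weighted-average formula for $\hat{\mu}(N_0)$ and $\|\hat{\bm \beta}(N_0)\|_1 \le \sqrt{dv}$), obtains $g_1 \le \frac{8(\sqrt{dv}+B)}{ne}$ and $G_2 \ge \min\bigl(\lambda, \frac{2}{2(\sqrt{dv}+B)+e}\bigr)$, and concludes via the Laplace mechanism (Lemma \ref{Lap}). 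The ``delicate point'' you flag --- that the data-dependent weights should have their sensitivity tracked through all $N_0$ iterations rather than frozen at the final solve --- is treated no more rigorously in the paper, which only remarks that the weight bounds remain valid at every iteration; so your argument is faithful to the published one, including that simplification.
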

\begin{proof}
 Denote ${\bm \omega} = (\hat{\mu}(N_0),\hat{{\bm \beta}}(N_0))$ and the $l_1$ sensitivity of  ${\bm \omega}$ as $s({\bm \omega})$.
 \rev{Let ${\bm a}_1$ and ${\bm a}_2$ be two vectors over $\mathbb{R}^d$ with $l_1$ norm at most $1$ and $y_1,y_2 \in [-B,B]$.
Consider the two inputs $D_1$ and
$D_2$ where $D_2$ is obtained from $D_1$ by changing one record $({\bm a}_1,y_1)$ into $({\bm a}_2,y_2)$. For convenience, assume the first $n-1$ records are same.}
According to Lemma \ref{le1}, let
$G({\bm \omega}) =I(N_0)$ and \rev{$g({\bm \omega}) = \frac{1}{n}w_2(\hat{\mu}(N_0)+a_2^\mathrm{T}\hat{{\bm \beta}}(N_0)-y_2)^2- \frac{1}{n}w_1(\hat{\mu}(N_0)+{\bm a}_1^\mathrm{T}\hat{{\bm \beta}}(N_0)-y_1)^2$}. Similar to the proof in Theorem \ref{le2}, we can achieve that
$$g_1 = max_{\bm \omega} ||\triangledown g({\bm \omega})||_1 \le \frac{2}{n}|w_1|(|\hat{\mu}(N_0)|+|{\bm a}_1^\mathrm{T}\hat{{\bm \beta}}(N_0)|+|y_1|)+\frac{2}{n}|w_2|(|\hat{\mu}(N_0)|+|{\bm a}_2^\mathrm{T}\hat{{\bm \beta}}(N_0)|+|y_2|).
$$
Notice that $( \hat{\mu}(N_0), \hat{{\bm \beta}}(N_0)) = \text{argmin}_{\mu, {\bm \beta}} I(N_0)$, then $\frac{\partial I(N_0)}{\partial \mu} = 0$ at $\mu = \hat{\mu}(N_0)$, that is,
\begin{eqnarray*}
\begin{array}{rrr}       
\quad &\sum_{i=1}^{n}w_i(\hat{\mu}(N_0)+{\bf X}_i \hat{\beta}(N_0)-Y_i) = 0    \\  
\Longleftrightarrow & \hat{\mu}(N_0) = \frac{-\sum_{i=1}^{n}w_i({\bf X}_i\hat{\beta}(N_0)-Y_i)}{\sum_{i=1}^{n}w_i}.
\end{array}              
\end{eqnarray*}
  Since $0< w_i \leq 1/e$ ,  $||y_i||_1 \leq B( i=1,\cdots,n)$ and
 $||\hat{{\bm \beta}}(N_0)||_1 \leq \sqrt{d} ||\hat{{\bm \beta}}(N_0)||_2 \leq \sqrt{dv}$,
 we have $||\hat{\mu}(N_0)||_1 \le \sqrt{d v}+B$.  Notice that above inequalities are still true in $t$-th($ \ge 2$) iteration and hence $\frac{1}{2(\sqrt{dv}+B)+e}\le w_i \le \frac{1}{e}.$    
 Then we can achieve that
$$g_1 = max_{\bm \omega} ||\triangledown g({\bm \omega})||_1 \le \frac{8(\sqrt{dv}+B)}{ne}.
$$
 \rev{In addition, denote $F_e(\omega) =\frac{1}{n}\sum_{i=1}^{n}w_ir^2_i(\mu,\beta)$. It can be checked that $F_e(\omega)$ is convex and
 $\frac{\partial F_e^2(\omega)}{\partial \mu^2}=\frac{2}{n}\sum_{i=1}^nw_i \le \frac{2}{2(\sqrt{dv}+B)+e}$, $\triangledown^2 (\frac{\lambda}{2}\beta^\mathrm{T}\beta) =\lambda {\bf I}$, where $\bf I$ is an identity matrix with size $d \times d$,
 then $G_2 \ge \min(\frac{2}{2(\sqrt{dv}+B)+e}, \lambda)$ and
 $s({\bm \omega}) \le \frac{8}{n \min(\frac{2}{2(\sqrt{dv}+B)+e},\lambda) e}(\sqrt{d v}+B)$ }.

 According to lemma \ref{Lap}, the result is obtained directly from the composition theorem.
\end{proof}


  For $e >0$, define  a perturbation of $L(\mu,{\bm \beta})$ as
  $$L_e(\mu, {\bm \beta})= \sum_{i=1}^{n}|r_i(\mu,{\bm \beta})| - \frac{e}{2}ln(e+|r_i(\mu,{\bm \beta})|) + \frac{\lambda}{2}{\bm \beta}^{\mathrm{T}}{\bm \beta}.$$
  \cite{DR2000} proves that iterative least square algorithm without adding noise is a special case of Majorization-Minimization ( MM) algorithms ( see \cite{DR2004}) for objective function $L_e(\mu,{\bm \beta})$ and obtained convergence results.
 \begin{proposition}\label{covg2}
 For linear median regression with a full-rank covariate matrix {\bf X},
the iterative least square algorithm without adding noise converges to the unique minimizer of $L_e(\mu,{\bm \beta})$.
 \end{proposition}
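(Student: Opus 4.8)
The plan is to cast the noise-free inner loop of Algorithm 2 as a Majorization--Minimization (MM) scheme for $L_e$ and then strengthen the resulting monotone descent into convergence of the iterates. Writing the smoothed per-residual loss as $\psi(u)=u-\tfrac{e}{2}\ln(e+u)$ for $u\ge 0$, so that $L_e({\bm\omega})=\sum_{i=1}^{n}\psi(|r_i({\bm\omega})|)+\tfrac{\lambda}{2}{\bm\beta}^{\mathrm T}{\bm\beta}$ with ${\bm\omega}=(\mu,{\bm\beta})$, the engine is that $s\mapsto\psi(\sqrt{s})$ is concave on $[0,\infty)$. I would verify this by differentiating twice; the numerator that controls the sign reduces to $-e^{2}-2e\,u-2u^{2}<0$, giving concavity. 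Concavity lets one bound $\psi$ from above by its tangent at the current squared residual $(r(t)_i)^{2}$:
\[
\psi(|r|)\;\le\;w_i^{(t)}\,r^{2}+c_i^{(t)},\qquad w_i^{(t)}\ \text{decreasing in } |r(t)_i|,
\]
with equality at $r=r(t)_i$ and $c_i^{(t)}$ independent of ${\bm\omega}$. Summing over $i$ and reproducing the ridge term exactly yields a surrogate $Q({\bm\omega}\mid\hat{\bm\omega}(t))$ that majorizes $L_e$ and touches it at $\hat{\bm\omega}(t)$, whose minimizer is precisely the weighted least--squares step defining $I(t+1)$; matching $w_i^{(t)}$ to the algorithm's weight $1/(|r(t)_i|+e)$ is what fixes the admissible perturbation constants, and checking this bookkeeping is one delicate point.

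Next I would record the MM descent inequality. Since $Q$ majorizes $L_e$ and $\hat{\bm\omega}(t+1)=\arg\min_{\bm\omega}Q({\bm\omega}\mid\hat{\bm\omega}(t))$, the sandwich $L_e(\hat{\bm\omega}(t+1))\le Q(\hat{\bm\omega}(t+1)\mid\hat{\bm\omega}(t))\le Q(\hat{\bm\omega}(t)\mid\hat{\bm\omega}(t))=L_e(\hat{\bm\omega}(t))$ shows that the objective values are non-increasing along the iterates, and strictly decreasing unless $\hat{\bm\omega}(t)$ already minimizes its own surrogate.

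Then I would pin down the limit through the geometry of $L_e$. As $\psi$ is increasing and convex with $\psi''(u)=\tfrac{e}{2(e+u)^{2}}>0$, each ${\bm\omega}\mapsto\psi(|r_i({\bm\omega})|)$ is convex; full rank of the augmented design $[\mathbf 1\ {\bf X}]$ makes ${\bm\omega}\mapsto r({\bm\omega})$ injective, so $\sum_i\psi(|r_i|)$ is \emph{strictly} convex, and together with the ridge term $L_e$ is strictly convex and coercive (the sublinear logarithm is dominated by $|r_i|$, so $L_e\to\infty$ as $\|{\bm\omega}\|\to\infty$). Hence $L_e$ has a unique minimizer ${\bm\omega}^{\star}$, its sublevel sets are compact, and by the descent property all iterates remain in the compact set $\{{\bm\omega}:L_e({\bm\omega})\le L_e(\hat{\bm\omega}(0))\}$.

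Finally I would upgrade value-descent to iterate-convergence. The MM map ${\bm\omega}\mapsto\arg\min_{\bm\omega}Q(\cdot\mid{\bm\omega})$ is single-valued and continuous (its surrogate is strictly convex in the first argument and jointly continuous in both), its fixed points are exactly the stationary points of $L_e$, and strict convexity leaves ${\bm\omega}^{\star}$ as the only such point. Applying Zangwill's global convergence theorem on the compact invariant sublevel set --- or invoking the convexity-based MM convergence results of \cite{DR2004} --- then forces $\hat{\bm\omega}(t)\to{\bm\omega}^{\star}$. I expect this last step to be the principal obstacle: monotone decrease of $L_e(\hat{\bm\omega}(t))$ by itself does not imply that the sequence $\hat{\bm\omega}(t)$ converges, so the argument must combine compactness, continuity of the MM map, and uniqueness of the stationary point to exclude cycling or drift along a plateau; together with reconciling the majorization weights with the stated $1/(|r(t)_i|+e)$, these are the two points demanding the most care.
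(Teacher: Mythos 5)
Your overall route is the right one, and it is in fact the paper's: the paper gives no proof of this proposition at all, deferring entirely to Hunter and Lange \cite{DR2000}, whose argument is exactly the MM scheme you reconstruct (quadratic tangent majorization from concavity of $s\mapsto\psi(\sqrt{s})$, monotone descent, strict convexity and coercivity under a full-rank design, then MM/Zangwill-type convergence theory). Your concavity computation is correct, and the descent, uniqueness, compactness, and fixed-point steps are sound, modulo the small caveat that the hypothesis should really be full rank of the augmented design $[\mathbf{1}\ {\bf X}]$ rather than of ${\bf X}$ alone.

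The genuine gap is in the step you yourself flagged as delicate: for $L_e$ exactly as printed in the paper, the bookkeeping does \emph{not} close. With $\psi(u)=u-\tfrac{e}{2}\ln(e+u)$, your own derivative computation gives tangent weights $g'(s_0)=\frac{e+2|r_0|}{4|r_0|\,(e+|r_0|)}$, which are not $\frac{1}{|r_0|+e}$ (they agree only at $|r_0|=e/2$) and which blow up as $|r_0|\to 0$; worse, $\psi'(0^+)=\tfrac12\neq 0$, so this $L_e$ retains kinks at zero residuals and cannot be touched from above by any smooth quadratic at such a point. Hence the weighted least-squares step of Algorithm 2 is not an MM step for the printed $L_e$, and the chain ``surrogate minimizer $=$ algorithm step'' breaks. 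The identity you need holds for the perturbation $-e\ln(e+|r|)$: the noise-free iteration is an MM algorithm for $\tfrac{2}{n}\sum_{i}\bigl[\,|r_i|-e\ln(e+|r_i|)\,\bigr]+\tfrac{\lambda}{2}{\bm \beta}^\mathrm{T}{\bm \beta}$ (equivalently Hunter--Lange's perturbed check loss, suitably scaled), for which the tangent weights come out as $\frac{1}{n(e+|r_i^{(t)}|)}$, exactly the algorithm's weights in the correct balance against the ridge term. This mismatch traces to a scaling inconsistency in the paper's definition of $L_e$ rather than to your strategy, but as written your proof asserts a tangency that is false for the stated objective, and what your argument actually proves is convergence to the unique minimizer of the corrected objective, not of the printed $L_e$.
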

 \begin{proposition}
 If ($\hat{\mu}_e, \hat{{\bm \beta}}_e)$ minimizes $L_e(\mu,{\bm \beta})$, then any limit point of ($\hat{\mu}_e, \hat{{\bm \beta}}_e)$ as $e$ tends to $0$
minimizes $L(\mu,{\bm \beta})$. If $L(\mu, {\bm \beta})$ has a unique minimizer $(\tilde{\mu}, \tilde{{\bm \beta}})$, then
$lim_{e \rightarrow 0}(\hat{\mu}_e, \hat{{\bm \beta}}_e)$ = $(\tilde{\mu}, \tilde{{\bm \beta}})$.
 \end{proposition}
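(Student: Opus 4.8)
The plan is to treat this as a standard convergence-of-minimizers (epi-convergence) result for the perturbed family $\{L_e\}$, exploiting that $L_e$ differs from $L$ only by the term $p_e(\mu,{\bm \beta}) := -\frac{e}{2}\sum_{i=1}^n \ln(e+|r_i(\mu,{\bm \beta})|)$, which vanishes as $e \to 0$. Accordingly, the first step is to show that $L_e \to L$ uniformly on every compact set. On a compact $K$ the residuals satisfy $|r_i(\mu,{\bm \beta})| \le R$ for some $R$, so $|\ln(e+|r_i|)| \le \max\{|\ln e|,\,\ln(e+R)\}$, and since $e|\ln e| \to 0$ the perturbation $p_e$ tends to $0$ uniformly on $K$. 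In particular $L_e \to L$ pointwise, and both functions are continuous, being finite and convex on $\mathbb{R}^{d+1}$.

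The crux, and the step I expect to be the main obstacle, is to show that the minimizers $(\hat{\mu}_e,\hat{{\bm \beta}}_e)$ stay in a fixed compact set as $e\to 0$, so that limit points exist and the uniqueness argument can close. For this I would derive a coercive lower bound on $L_e$ that is uniform in $e$. Setting $\phi_e(u) = u - \frac{e}{2}\ln(e+u)$ for $u \ge 0$, one checks $\phi_e'(u) = 1 - \frac{e}{2(e+u)} \ge \frac12$ and $\phi_e(0) = -\frac{e}{2}\ln e \ge 0$ whenever $0 < e \le 1$, whence $\phi_e(u) \ge \frac12 u$. Applying this termwise gives $L_e(\mu,{\bm \beta}) \ge \frac12\sum_{i=1}^n |r_i(\mu,{\bm \beta})| + \frac{\lambda}{2}{\bm \beta}^\mathrm{T}{\bm \beta}$ for all $e \in (0,1]$, and the right-hand side is coercive: the ridge penalty bounds ${\bm \beta}$ on any sublevel set, and for ${\bm \beta}$ bounded the sum $\sum_i|\mu+{\bf X}_i{\bm \beta}-Y_i|$ grows linearly in $|\mu|$, bounding $\mu$ as well. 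Since $L_e(\hat{\mu}_e,\hat{{\bm \beta}}_e) \le L_e(0,\mathbf{0})$, and $L_e(0,\mathbf{0})$ is bounded above uniformly for $e \in (0,1]$, all minimizers lie in a fixed compact set $K$.

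With boundedness in hand the remaining argument is routine. Let $(\hat{\mu}_{e_k},\hat{{\bm \beta}}_{e_k}) \to (\mu^*,{\bm \beta}^*)$ along some $e_k \to 0$. For every $(\mu,{\bm \beta})$ we have $L_{e_k}(\hat{\mu}_{e_k},\hat{{\bm \beta}}_{e_k}) \le L_{e_k}(\mu,{\bm \beta})$, and the right side converges to $L(\mu,{\bm \beta})$. On the left, writing $L_{e_k} = L + p_{e_k}$, continuity of $L$ together with the uniform smallness of $p_{e_k}$ on $K$ yields $L_{e_k}(\hat{\mu}_{e_k},\hat{{\bm \beta}}_{e_k}) \to L(\mu^*,{\bm \beta}^*)$. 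Passing to the limit gives $L(\mu^*,{\bm \beta}^*) \le L(\mu,{\bm \beta})$ for all $(\mu,{\bm \beta})$, so every limit point of $(\hat{\mu}_e,\hat{{\bm \beta}}_e)$ minimizes $L$.

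For the second assertion, suppose $L$ has the unique minimizer $(\tilde{\mu},\tilde{{\bm \beta}})$. By the previous paragraph every convergent subsequence of the bounded family $(\hat{\mu}_e,\hat{{\bm \beta}}_e)$ has limit $(\tilde{\mu},\tilde{{\bm \beta}})$. A bounded sequence all of whose subsequential limits coincide must itself converge to that common value, so $\lim_{e\to 0}(\hat{\mu}_e,\hat{{\bm \beta}}_e) = (\tilde{\mu},\tilde{{\bm \beta}})$, completing the proof. The only delicate points are the uniform coercivity estimate above and the careful handling of $\ln(e+|r_i|)$ near residuals close to zero, both of which are controlled by the elementary bound on $\phi_e$.
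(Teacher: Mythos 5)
Your proof is correct, but note that the paper does not actually prove this proposition: it defers entirely to Hunter and Lange \cite{DR2000}, where the analogous limit-point result is established for the MM perturbation of the quantile-regression objective. Your argument is therefore a genuinely self-contained alternative, and every step checks out: (i) uniform convergence of $L_e$ to $L$ on compact sets, driven by $e|\ln e|\to 0$; (ii) the uniform coercivity bound via $\phi_e(u)=u-\frac{e}{2}\ln(e+u)$, where $\phi_e'(u)=1-\frac{e}{2(e+u)}\ge \frac12$ and $\phi_e(0)=-\frac{e}{2}\ln e\ge 0$ give $\phi_e(u)\ge \frac{u}{2}$ for $e\in(0,1]$, trapping all minimizers in one fixed compact set; and (iii) the standard subsequence argument. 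Step (ii) is the essential contribution, and it is exactly where a naive ``the perturbation vanishes'' argument would fail: the perturbation $-\frac{e}{2}\sum_i \ln(e+|r_i|)$ is unbounded below as $\|(\mu,{\bm \beta})\|\to\infty$ for each fixed $e$, so the compactness of the set of minimizers cannot be skipped, and your elementary bound supplies it cleanly. What the paper's citation route buys is brevity; what your route buys is transparency, and also strict applicability, since the objective here carries the ridge penalty $\frac{\lambda}{2}{\bm \beta}^\mathrm{T}{\bm \beta}$ absent from \cite{DR2000} (the penalty in fact works in your favor, providing coercivity in ${\bm \beta}$ so that only $\mu$ needs the residual sum). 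One small point worth flagging: the paper's displayed $L_e$ drops the $1/n$ normalization present in the definition of $L(\mu,{\bm \beta})$; your proof implicitly reads $L$ and $L_e$ as differing only by the logarithmic term, which is the only reading under which the proposition is literally true, and your argument is insensitive to which normalization is adopted.
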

The proof of above propositions can be seen in \cite{DR2000}.
 \begin{theorem}\label{con2}
Given a $l_1$ regression problem with regularization parameter $\lambda$, let ${\bm \omega}_1$ be the classifier
that minimizes $L_e(\mu,{\bm \beta})$, and ${\bm \omega}_2$ be the classifier output by Algorithm $2$ respectively. Then, with
probability $1-\alpha$, $||{\bm \omega}_1-{\bm \omega}_2||_1 \leq \rev{\frac{8(\sqrt{d v}+B)(d+1) log(\frac{d+1}{\alpha})}{\epsilon \min(\frac{2}{2(\sqrt{dv}+B)+e},\lambda) ne}}$.
\end{theorem}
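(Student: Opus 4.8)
The plan is to exploit the fact that Algorithm $2$ is an \emph{output-perturbation} scheme: the noise ${\bf U}$ is added to the converged iterate rather than folded into the objective, so once the noiseless iterate has converged the entire accuracy gap collapses to the size of ${\bf U}$.

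First I would identify the noiseless output with ${\bm \omega}_1$. By Proposition \ref{covg2}, the iterative least squares procedure run without noise converges to the \emph{unique} minimizer of $L_e(\mu,{\bm \beta})$, which is precisely ${\bm \omega}_1$. Hence, once $N_0$ is large enough that the tolerance test in the while-loop is met, the noiseless iterate $(\hat{\mu}(N_0),\hat{{\bm \beta}}(N_0))$ coincides with ${\bm \omega}_1$ up to negligible error, and I would treat it as equal to ${\bm \omega}_1$. Since Algorithm $2$ returns ${\bm \omega}_2 = (\hat{\mu}(N_0),\hat{{\bm \beta}}(N_0)) + {\bf U}$, this gives ${\bm \omega}_2 - {\bm \omega}_1 = {\bf U}$, and therefore $\|{\bm \omega}_1-{\bm \omega}_2\|_1 = \|{\bf U}\|_1$.

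Next I would control $\|{\bf U}\|_1$ exactly as in Lemma \ref{LapLe}. The vector ${\bf U}$ has $d+1$ i.i.d. coordinates, each drawn from $\mathrm{Lap}(c)$ with scale $c = s({\bm \omega})/\epsilon$, where $s({\bm \omega}) \le \frac{8(\sqrt{dv}+B)}{n\,\min\!\left(\frac{2}{2(\sqrt{dv}+B)+e},\lambda\right)e}$ is the sensitivity established in the privacy proof of Algorithm $2$ (itself obtained via Lemma \ref{le1}). Because each $|U_i|$ is exponential with mean $c$, the sum $\|{\bf U}\|_1 = \sum_{i=1}^{d+1}|U_i|$ is distributed as $\Gamma(d+1,c)$. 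This is the setting of Lemma \ref{LapLe} with $c$ in place of $4/\epsilon$, so the identical union-bound argument gives, with probability at least $1-\alpha$, $\|{\bf U}\|_1 \le c\,(d+1)\log\!\left(\frac{d+1}{\alpha}\right)$. Substituting $c = s({\bm \omega})/\epsilon$ yields the claimed bound $\frac{8(\sqrt{dv}+B)(d+1)\log\!\left(\frac{d+1}{\alpha}\right)}{\epsilon\,\min\!\left(\frac{2}{2(\sqrt{dv}+B)+e},\lambda\right)ne}$.

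The main obstacle is the identification step. Proposition \ref{covg2} supplies only asymptotic convergence, so for finite $N_0$ there is a residual optimization error $\eta_{N_0}$, and the fully honest statement would be $\|{\bm \omega}_1-{\bm \omega}_2\|_1 \le \eta_{N_0} + \|{\bf U}\|_1$ by the triangle inequality; the stated bound corresponds to the regime where the tolerance $\tau$ has effectively been reached and $\eta_{N_0}$ is negligible. I would also flag that consistency with the target bound forces the Laplace scale to be $s({\bm \omega})/\epsilon$ rather than the $\epsilon$-free expression written for $c$ in the statement of Algorithm $2$, so I would read that $c$ as carrying the missing $1/\epsilon$ factor. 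The remaining work is the routine exponential-tail computation already packaged in Lemma \ref{LapLe}.
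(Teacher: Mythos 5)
Your proposal takes essentially the same route as the paper: the paper's own (one-line) proof likewise identifies the noiseless converged iterate with the minimizer ${\bm \omega}_1$ of $L_e(\mu,{\bm \beta})$, observes that the added noise has $l_1$ norm distributed as $\Gamma\bigl(d+1,\ s({\bm \omega})/\epsilon\bigr)$, and applies the union-bound tail estimate of Lemma \ref{LapLe} to conclude. In fact your write-up is more careful than the paper's: you make explicit the identification step via Proposition \ref{covg2} (including the finite-$N_0$ residual that the paper ignores) and you correctly flag that the Laplace scale $c$ as written in Algorithm $2$ lacks the $1/\epsilon$ factor that the theorem's bound silently assumes.
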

\begin{proof}
Since $||{\bm b}||_1$ is a random variable drawn from
$\Gamma(d+1,\frac{8(\sqrt{d v}+B)}{\epsilon \min(\frac{2}{2(\sqrt{dv}+B)+e},\lambda) ne})$, with possibility $1-\alpha$, $||{\bm b}||_1 \leq \frac{8(\sqrt{d v}+B)(d+1) log(\frac{d+1}{\alpha})}{\epsilon \min(\frac{2}{2(\sqrt{dv}+B)+e},\lambda) ne}$, the theorem is obtained.
\end{proof}

Therefore, for fixed small $e$, if $n$ is sufficient large, accuracy can be ensured in practice.

\subsection{Algorithm 3}
In \cite{DL2016}, the authors argue that
adding noise to the estimated parameters after optimization would destroy the utility of the learned model.
Hence, we prefer a more sophisticated method to control the influence of the training data during
the training process, especially in the stochastic gradient decent computation.
\cite{Wu2008} declares that greedy coordinate descent is an effective method for $l_1$ regression,
where $l_1$ regression means median regression. So we apply this idea to minimize objective function $L(\mu,{\bm \beta})$ in a similar way. Although $L(\mu,{\bm \beta}) $ is nondifferentiable, it does possess directional derivatives along each forward or backward coordinate direction. For example, if ${\bm e}_k$ is the coordinate direction along which $\beta_k$ varies, then the objective function (2) has directional derivatives
$$
d_{{\bm e}_k^+}L(\mu,{\bm \beta}) = {lim}_{\tau\rightarrow 0^+}
\frac{L(\mu,{\bm \beta}+\tau {\bm e}_k)-L(\mu,{\bm \beta})}{\tau}=d_{{\bm e}_k^+}F(\mu,{\bm \beta})+\lambda \beta_k
$$
and
$$
d_{{\bm e}_k^-}L(\mu,{\bm \beta}) = {lim}_{\tau\rightarrow 0^-}
\frac{L(\mu,{\bm \beta}+\tau {\bm e}_k)-L(\mu,{\bm \beta})}{\tau}=d_{{\bm e}_k^-}F(\mu,{\bm \beta})+\lambda \beta_k.
$$
In $l_1$ regression,  the coordinate direction derivatives are
\begin{eqnarray}\label{l1der1}
d_{{\bm e}_k^+}F(\mu,{\bm \beta})= \frac{1}{n}\sum_{i=1}^n   
\left\{                        
\begin{array}{rll}       
-x_{ik}, & r_i(\mu,{\bm \beta})<0, \\  
x_{ik},  &  r_i(\mu,{\bm \beta})>0, \\
|x_{ik}|, &  r_i(\mu,{\bm \beta})=0,
\end{array}              
\right.                      
\end{eqnarray}
and
\begin{eqnarray}\label{l1der2}
d_{{\bm e}_k^-}F(\mu,{\bm \beta})= \frac{1}{n}\sum_{i=1}^n   
\left\{                        
\begin{array}{rll}       
x_{ik}, & r_i(\mu,{\bm \beta})<0, \\  
-x_{ik},&  r_i(\mu,{\bm \beta})>0, \\
|x_{ik}|,&  r_i(\mu,{\bm \beta})=0.
\end{array}              
\right.              
\end{eqnarray}
In greedy coordinate
descent progress\cite{JH2007}, we update the direction of parameter $\beta_k$ based on
min$\{d_{{\bm e}_k^+}L(\mu,{\bm \beta}), d_{{\bm e}_k^-}L(\mu,{\bm \beta})\}$. If both  coordinate directional derivatives are
nonnegative, the update of $\beta_k$ stops. In addition, $\hat{\mu} = \frac{1}{n_0} \sum^{n_0}_{i=1}(Y_i-{\bf X}_i\hat{{\bm \beta}})$, where $n_0 = n/N_0$. And by the method of batch
gradient \cite{W2003}, the $t$-th iteration only employs records with batch size $n_0$, which means $L(\hat{\mu}(t),\hat{{\bm \beta}}(t))$ in the algorithm is calculated by subset
({\bf X}(t), {\bf Y}(t)).  The algorithm is described as follows.
\newline
{\bf{Algorithm 3:}} \newline
 \textbf{Inputs: privacy parameters $\epsilon$, deign matrix
${\bf X}$, response vector ${\bf Y}$, regularization parameter $\lambda$, positive number $\ell$ and the number of iterations $N_0$.
 \newline
    Randomly split ( {\bf X}, {\bf Y}) into $N_0$ disjoint subsets
    of size $n_0$.
    \newline
    Initialize the algorithm with a vector $(\hat{\mu}(0),\hat{{\bm \beta}}(0)$ (\textrm{ such as the solution of $l_2$ regression}).
    \newline
    for $t= 0, 1, 2, ..., N_0-1$ do
    \newline
    \hspace*{1cm} $\eta_t = \frac{\ell}{t+1}$
    \newline
    \hspace*{1cm} for $k=1,2,\cdots,d$ do
    \newline
    \hspace*{1cm}\hspace*{1cm}  $\hat{\beta}_k(t+0.5) = \hat{\beta}_k(t)-
        \eta_t (\text{min}~ \{d_{{\bm e}_k^+}L(\hat{\mu}(t),\hat{{\bm \beta}}(t)), d_{{\bm e}_k^-}L(\hat{\mu}(t),\hat{{\bm \beta}}(t))\})$,
        \newline
    \hspace*{1cm}\hspace*{1cm} $\hat{\beta}_k(t+1) = \hat{\beta}_k(t+0.5)+ U_t$,
        where $U_t \sim Lap(\frac{2\eta_t}{\epsilon n_0}), n_0=n/N_0$.
        \newline
    \hspace*{1cm} end for
    \newline
    \hspace*{1cm} $\hat{\mu}(t+1) = \frac{1}{n_0} \sum^{n_0}_{i=1} (Y_i-{\bf X}_i\hat{{\bm \beta}}(t+1))$.
    \newline
    end for
    \newline
    Output $\hat{\beta}:=\hat{\beta}(N_0)$, $\hat{\mu} = \hat{\mu}(N_0)$.
}

\begin{theorem}
Given a set of $n$ samples ${\bf X}_1, \ldots , {\bf X}_n$ over $\mathbb{R}^d$  with labels $Y_1, \ldots , Y_n$,  where for each
$i$($1\le i \le n$), $||{\bf X}_i||_1 \le 1$ and $||Y_i||_1 \le B$, the output of Algorithm $3$ preserves $(\epsilon,0)$-differential privacy.
\end{theorem}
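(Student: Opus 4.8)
The plan is to treat Algorithm 3 as a differentially private coordinate-descent procedure whose privacy rests on two ingredients: a per-iteration vector-valued Laplace mechanism, and the disjointness of the $N_0$ minibatches. First I would fix an iteration $t$ and condition on its input $(\hat{\mu}(t),\hat{{\bm \beta}}(t))$, which is a function of data other than the single record distinguishing the two neighbouring inputs $D_1$ and $D_2$. Writing the coordinate update as $\hat{\beta}_k(t+0.5)=\hat{\beta}_k(t)-\eta_t\min\{d_{{\bm e}_k^+}L,d_{{\bm e}_k^-}L\}$, I would observe that at the fixed point $\hat{{\bm \beta}}(t)$ the penalty contribution $\lambda\beta_k$ is identical for both neighbours and cancels, so that the only data-dependent quantity is $\min\{d_{{\bm e}_k^+}F,d_{{\bm e}_k^-}F\}$, evaluated over the batch (i.e. with $n$ replaced by $n_0$).

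The core step is the $l_1$-sensitivity bound of the full update vector in iteration $t$. From the explicit forms (\ref{l1der1}) and (\ref{l1der2}), replacing one record $({\bm a}_1,y_1)$ by $({\bm a}_2,y_2)$ changes the $i$-th summand of $d_{{\bm e}_k^\pm}F$ from a term of magnitude at most $|a_{1k}|/n_0$ to one of magnitude at most $|a_{2k}|/n_0$, so each directional derivative has per-coordinate sensitivity at most $(|a_{1k}|+|a_{2k}|)/n_0$. Passing this bound through the minimum via the elementary inequality $|\min(a,b)-\min(a',b')|\le\max(|a-a'|,|b-b'|)$ and summing over $k$ with $\|{\bm a}_1\|_1,\|{\bm a}_2\|_1\le 1$, I would conclude that the $l_1$ sensitivity of the entire $d$-dimensional vector $\eta_t(\min\{d_{{\bm e}_k^+}L,d_{{\bm e}_k^-}L\})_{k}$ is at most $2\eta_t/n_0$. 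Since the algorithm injects independent $\mathrm{Lap}(2\eta_t/(\epsilon n_0))$ noise into each coordinate, the per-coordinate scale equals $(2\eta_t/n_0)/\epsilon$, so the vector-valued Laplace mechanism and Lemma \ref{Lap} give that the release of $\hat{{\bm \beta}}(t+1)$ from $\hat{{\bm \beta}}(t)$ is $(\epsilon,0)$-differentially private.

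To combine the iterations, I would exploit that the $N_0$ subsets are disjoint, so the single record separating $D_1$ from $D_2$ lies in exactly one batch, say batch $t^*$. Then $\hat{{\bm \beta}}(1),\ldots,\hat{{\bm \beta}}(t^*)$ have identical distributions under $D_1$ and $D_2$; iteration $t^*$ is $(\epsilon,0)$-private by the previous paragraph; and every later iteration uses only unchanged batches and is therefore post-processing of $\hat{{\bm \beta}}(t^*+1)$. Parallel composition over disjoint data thus yields overall $(\epsilon,0)$-differential privacy without the per-step budget accumulating, closing the argument much as in the proof for Algorithm 2.

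The main obstacle I anticipate is twofold. The analytic point is propagating the sensitivity estimate through the non-smooth $\min$ of two directional derivatives and verifying that summing the per-coordinate bounds under the constraint $\|{\bm a}\|_1\le 1$ gives exactly the scale $2\eta_t/n_0$ to which the Laplace noise is calibrated; it is essential here that privacy is claimed for the \emph{whole} update vector (total $l_1$ sensitivity) rather than per coordinate, so the $d$ updates do not compose into a $d\epsilon$ loss. The structural point is justifying that iterations compose in \emph{parallel} rather than sequentially; in particular one must check that the deterministic recomputation $\hat{\mu}(t+1)=\frac{1}{n_0}\sum_{i=1}^{n_0}(Y_i-{\bf X}_i\hat{{\bm \beta}}(t+1))$ does not leak additional information, i.e. that it can be regarded as using only the already-privatized $\hat{{\bm \beta}}(t+1)$ together with the batch at step $t^*$, so that no privacy budget is consumed beyond that of the batch containing the altered record.
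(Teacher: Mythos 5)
Your proposal matches the paper's proof in all essentials: the same reduction to the single iteration $t^*$ whose batch contains the changed record (earlier iterations unaffected, later ones post-processing), the same $l_1$-sensitivity bound $2\eta_t/n_0$ for the whole update vector obtained from $\|{\bm a}_1\|_1+\|{\bm a}_2\|_1\le 2$, and the same appeal to the Laplace mechanism of Lemma \ref{Lap} with noise scale $2\eta_t/(\epsilon n_0)$. The two refinements you add are not in the paper but do not change the route: the explicit $\min$-stability inequality (the paper bounds $Dir_1(t)-Dir_2(t)$ without comment), and the flagged concern about the $\hat{\mu}(t+1)$ recomputation, which the paper disposes of by simply asserting it is post-processing.
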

\begin{proof}
Because of sample splitting, for $(x, y)  \in ({\bf X}(t), {\bf Y}(t))$
for
some $0 \le t  \le N_0-1$, it suffices to prove the privacy guarantee for the
$t$-th iteration of the algorithm: any iteration prior to the $t$-th one does not depend on $(x, y)$, while any iteration after the $t$-th one is differentially private by
post-processing \cite{Dwork2014}.

At the $t$-th iteration, the algorithm first updates the non-sparse estimate of $\beta_k$:
$$
\hat{\beta}_k(t+0.5) = \hat{\beta}_k(t)- \eta_t (\; \text{min}~ {d_{{\bm e}_k^+}L(\hat{\mu}(t),\hat{{\bm \beta}}(t)), d_{{\bm e}_k^-}L(\hat{\mu}(t),\hat{{\bm \beta}}(t)})).
$$
  \rev{Let $a_1$ and $a_2$ be two vectors over $\mathbb{R}^d$ with $l_1$ norm at most $1$ and $y_1,y_2 \in [-B,B]$.
Consider the two inputs $D_1$ and
$D_2$ where $D_2$ is obtained from $D_1$ by changing one record $(\;{\bm a}_1,y_1)$ into $(\;{\bm a}_2,y_2)$. For convenience, assume the first $n_0-1$ records are same.}
Denote $Dir_1(t)$ as the direction derivation
( $\text{min}~ \{d_{{\bm e}_k^+}L(\hat{\mu}(t),\hat{{\bm \beta}}(t)), d_{{\bm e}_k^-}L(\hat{\mu}(t),\hat{{\bm \beta}}(t))\})$ for the dataset $D_1$ and $Dir_2(t)$ for the dataset $D_2$.
Notice that $\hat{\beta}(t)$ does not depend on ( ${\bf X}(t),{\bf Y}(t))$, so $\hat{{\bm \beta}}(t+1)$ would be $(\epsilon,0)$-differentially private if it can be shown
that: for every pair $D$ and $D'$, we have \rev{
$$
|| \eta_t/n_0 \left [Dir_1(t)-Dir_2(t)\right ]||_1
\le \frac{2\eta_t}{n_0}.
$$
}
This is true, since $|| \eta_t/n_0 \left [Dir_1(t)-Dir_2(t)\right ]||_1
\le \eta_t/n_0(||{\bm a}_1||_1+||{\bm a}_2||_1)
\le \frac{2\eta_t}{n_0}$,
then the privacy guarantee for ${\bm \beta}$ is proved by Lemma \ref{Lap}. In addition, since $\hat{\mu} = \frac{1}{n_0} \sum^{n_0}_{i=1}(Y_i-{\bf X}_i\hat{{\bm \beta}})$, it is differentially private by
post-processing \cite{Dwork2014}. Then the theorem is obtained.
\end{proof}

\cite{Wu2008} said that coordinate descent may fail for a nondifferentiable function since all directional derivatives
must be nonnegative at a minimum point. However, if we can obtain a suitable approximate value quickly, this shortcoming can be accepted in practice. The following theorem shows that
estimated parameters would be stable when the number of iteration $N_0$ is large.
\begin{theorem}
Given a set of $n$ samples ${\bf X}_1, \ldots , {\bf X}_n$ over $\mathbb{R}^d$   with labels $Y_1, \ldots , Y_n$( for each
$i$, $||{\bf X}_i|| \le 1$ and $||Y_i||_1 \le B$), Algorithm $3$ is convergent in probability with rate $O(1/t)$.
\end{theorem}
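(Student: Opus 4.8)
The plan is to recognize the $\beta$-update of Algorithm 3 as a stochastic subgradient method for the strongly convex objective $L(\mu,\beta)$ and to import the classical $O(1/t)$ rate for strongly convex stochastic approximation. The first observation is that the injected Laplace noise carries the step size inside it: since $U_t \sim \mathrm{Lap}(2\eta_t/(\epsilon n_0))$, we may write $U_t = \eta_t W_t$ with $W_t \sim \mathrm{Lap}(2/(\epsilon n_0))$ a \emph{fixed} zero-mean variable of variance $8/(\epsilon^2 n_0^2)$. Hence the coordinate update
$$\hat\beta_k(t+1) = \hat\beta_k(t) - \eta_t\bigl(\min\{d_{e_k^+}L, d_{e_k^-}L\} - W_t^{(k)}\bigr)$$
is exactly a noisy subgradient step $\beta_{t+1} = \beta_t - \eta_t\hat g_t$, where $\hat g_t = g_t - W_t$ is an unbiased estimate ($E[\hat g_t \mid \mathcal F_t] = g_t$) of the descent direction $g_t$, with $\mathcal F_t$ the history up to iteration $t$. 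Because the ridge term $\tfrac{\lambda}{2}\beta^\mathrm{T}\beta$ makes $L$ strongly convex in $\beta$ with modulus $\lambda$, $L$ has a unique minimizer $\omega^\ast=(\mu^\ast,\beta^\ast)$, and the algorithm is effectively minimizing over $\beta$ with $\mu$ determined by $\beta$ through the deterministic affine map $\hat\mu(t)=\tfrac{1}{n_0}\sum_i(Y_i-{\bf X}_i\hat\beta(t))$.

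The core of the argument is the standard one-step distance recursion. Expanding $\|\beta_{t+1}-\beta^\ast\|^2$ and taking conditional expectation gives
$$E\bigl[\|\beta_{t+1}-\beta^\ast\|^2 \mid \mathcal F_t\bigr] = \|\beta_t-\beta^\ast\|^2 - 2\eta_t\langle g_t,\beta_t-\beta^\ast\rangle + \eta_t^2\,E\bigl[\|\hat g_t\|^2 \mid \mathcal F_t\bigr].$$
Two inputs then enter. First, since $g_t$ is a subgradient of the strongly convex $L$, we have $\langle g_t,\beta_t-\beta^\ast\rangle \ge L(\beta_t)-L(\beta^\ast)+\tfrac{\lambda}{2}\|\beta_t-\beta^\ast\|^2 \ge \lambda\|\beta_t-\beta^\ast\|^2$. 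Second, the constraint $\|{\bf X}_i\|_1\le1$ forces the $\ell_1$ norm of the subgradient of $F$ to be at most $1$, so together with iterate boundedness one obtains a uniform bound $E[\|\hat g_t\|^2\mid\mathcal F_t]\le M$, where $M$ absorbs both the deterministic subgradient bound and the constant noise variance $8d/(\epsilon^2 n_0^2)$. Writing $a_t=E\|\beta_t-\beta^\ast\|^2$ these yield the scalar recursion
$$a_{t+1} \le (1-2\lambda\eta_t)\,a_t + M\eta_t^2, \qquad \eta_t=\frac{\ell}{t+1}.$$

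Choosing $\ell$ so that $2\lambda\ell>1$, an elementary induction (a standard stochastic approximation lemma, cf. Chung) shows $a_t\le C/t$ for a constant $C=C(\ell,\lambda,M)$; this is the announced $O(1/t)$ rate in mean square. Convergence in probability follows immediately from Markov's inequality: for any $\delta>0$,
$$\Pr\bigl(\|\beta_t-\beta^\ast\|>\delta\bigr)=\Pr\bigl(\|\beta_t-\beta^\ast\|^2>\delta^2\bigr)\le \frac{a_t}{\delta^2}\le\frac{C}{t\,\delta^2}\longrightarrow 0,$$
and, since $\hat\mu(t)$ is a Lipschitz function of $\hat\beta(t)$, $\hat\mu(t)\to\mu^\ast$ in probability at the same rate, giving convergence of the whole iterate $\omega_t$.

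I expect the main obstacle to be the two ingredients feeding the recursion rather than the recursion itself: (i) verifying that the greedy min-directional-derivative vector $g_t$ is a genuine subgradient of $L$, so that the strong-convexity inequality $\langle g_t,\beta_t-\beta^\ast\rangle\ge\lambda\|\beta_t-\beta^\ast\|^2$ is legitimate; and (ii) establishing that the noisy iterates $\beta_t$ remain almost surely bounded, which is precisely what licenses the uniform second-moment bound $M$. In the presence of unbounded Laplace perturbations this boundedness is most safely secured by a Robbins--Siegmund supermartingale argument, or else by inserting an explicit projection onto a large ball; either device closes the gap between the clean subgradient recursion and the algorithm as actually stated.
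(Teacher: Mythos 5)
Your proposal targets a much stronger statement than the one the paper actually proves, and the gap between the two is exactly where your argument breaks down. The paper's own proof is a two-line increment bound: since the greedy direction satisfies $|Dir(t)| \le \frac{1}{n_0}\sum_{i=1}^{n_0}\|x_{ik}\|_1 \le 1$ and $U_t \sim \mathrm{Lap}(2\eta_t/(\epsilon n_0))$, the update $\hat\beta_k(t+1)=\hat\beta_k(t)-\eta_t Dir(t)+U_t$ gives $|\hat\beta_k(t+1)-\hat\beta_k(t)| \le \eta_t + |U_t| = O_p(1/t)$, and the same for $\hat\mu$ by post-processing. That is the entire proof: ``convergent in probability with rate $O(1/t)$'' is interpreted as the \emph{successive increments} vanishing at rate $1/t$ --- the paper introduces the theorem as showing the estimated parameters are ``stable'' when $N_0$ is large --- not as convergence to the minimizer $\omega^\ast$ of $L$. (Indeed, since $\sum_t \eta_t$ diverges, the paper's notion does not even imply the iterates have a limit.)

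Your route --- recasting the update as noisy stochastic subgradient descent on a $\lambda$-strongly-convex objective and running the recursion $a_{t+1}\le(1-2\lambda\eta_t)a_t+M\eta_t^2$ --- aims at convergence to $\omega^\ast$, and it fails at precisely the point you flag as obstacle (i): the coordinatewise greedy direction with components $\min\{d_{e_k^+}L,\,d_{e_k^-}L\}$ is \emph{not} a subgradient of $L$. For a nonsmooth convex function whose kinks (the hyperplanes $r_i=0$ here) are not axis-aligned, all forward and backward coordinate directional derivatives can be nonnegative at a point that is not a minimizer; the paper explicitly concedes this failure mode, citing Wu and Lange, and Algorithm 3 stalls at such points. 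Hence the inequality $\langle g_t,\beta_t-\beta^\ast\rangle \ge \lambda\|\beta_t-\beta^\ast\|^2$ is unavailable, and convergence of Algorithm 3 to $\omega^\ast$ cannot be established at all --- no Robbins--Siegmund or projection device repairs this, because the obstruction is the search direction, not the noise. There are secondary mismatches as well: your $O(1/t)$ rate needs $2\lambda\ell>1$, but $\ell$ is an arbitrary input of the algorithm (for $2\lambda\ell\le 1$ the classical recursion only yields $O(t^{-2\lambda\ell})$ or $O(\log t/t)$); the unbiasedness $E[\hat g_t\mid\mathcal F_t]=g_t$ relative to the full-data objective fails because the minimum of two directional derivatives is a nonlinear functional of the batch; and $t$ only runs over $N_0$ disjoint batches, so genuine $t\to\infty$ asymptotics are not available. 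To prove what the paper actually asserts, the increment bound in the first paragraph is all that is needed.
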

\begin{proof}
Consider the $t$-th iteration for $\beta_k$, since $|Dir(t)| \le \frac{1}{n_0}\sum_{i=1}^{n_0} ||x_{ik}||_1 \le 1$ and
$\hat{\beta}_k(t+1) = \hat{\beta}_k(t) - \eta_t Dir(t) + U_t$, $|\hat{\beta}_k(t+1)-\hat{\beta}_k(t)| \le |\eta_t|+ |U_t| = O_p(1/t)$, where $O_p(1/t)$ indicates that it converges in probability with rate $O(1/t)$. Since $\hat{\mu} = \frac{1}{n_0} \sum^{n_0}_{i=1} (Y_i-{\bf X}_i\hat{{\bm \beta}})$, it is
convergent in probability with rate $O(1/t)$, too. Then the theorem is obtained.
\end{proof}
\section{Simulated results}
Denote $n$ as the number of samples. Here let $n$ take two values: $5000$ and $5000000$. In fact, when $n$ is  small, such as $100$, Algorithm $1$ can perform well, but Algorithm $2$ requires $n$ bigger ( otherwise the noise added would be big which would result in big estimation error). Consider the following example with three independent variables $x_1, x_2, x_3$, where $y_i = 2 +3x_{i1}-4x_{i3} + u_i$ and $u_i$ obeys the
Laplace distribution $Lap(2)$, for $i=1, \ldots, n$. We assume that, for each $i$( $1\le i\le n$), $l_1$ norm of ${\bf X}_i$ is less than $1$ and $l_1$ norm of $Y_i$ is less than $2$. In practice, we take $\lambda$ as $0.002$ in the objective function. In Algorithm $1$, parameter $\gamma$ is taken as $0.05$. In Algorithm $2$, we set parameter $e =0.2$, tolerance parameter $\tau=10^{-6}$ and
the number of iteration $N_0 =  200$. In fact, Algorithm $2$ tends to converge with less than $30$ iterations. In Algorithm $3$, we set $\ell = 0.1$, step size $\eta_t=\frac{\ell}{t+1}$ and the number of iteration $N_0 = 40$. In addition, privacy parameters $(\epsilon,\delta)$=
$(0.1,0)$ for all the above algorithms. The results are listed in Table \ref{SimulatedResults1} and Table \ref{SimulatedResults2}. It shows that Algorithm $1$ performs better than the others when $n=5000$. However, when $n$ becomes much bigger, Algorithm $1$ costs much more time. Notice that when $n=5000000$, the noise added to Algorithm $2$ becomes small and it makes the estimated result precise. In addition, Algorithm $3$ costs less time in both cases, but it highly depends on initial value and step size $\eta_t$, which is a common problem for the gradient descent method \cite{W2003}.
\begin{table}[htbp]
  \centering
  \caption{Estimated results with sample size $5000$}
    \begin{tabular}{ccccc}
    \toprule
          & \multicolumn{1}{l}{Algorithm 1} & \multicolumn{1}{l}{Algorithm 2} & \multicolumn{1}{l}{Algorithm 3} & \multicolumn{1}{l}{True value} \\
    \midrule
    \textbf{$\mu$} & 2.0684 & 1.9440 & 1.8204 & 2 \\
    \textbf{$\beta_1$} & 3.0007 & 0.9227 & 2.6914 & 3 \\
    \textbf{$\beta_2$} & -0.0295 & 13.2762 & -0.6099 & 0 \\
    \textbf{$\beta_3$} & -4.0835 & -14.2089 & -3.2283 & -4 \\

    \textbf{time(s)} & 0.7113 & 0.3143 & 0.1220 &  \\
    \bottomrule
    \end{tabular}%
  \label{SimulatedResults1}%
\end{table}%

\begin{table}[htbp]
  \centering
  \caption{Estimated results with sample size  $5000000$}
    \begin{tabular}{ccccc}
    \toprule
          & \multicolumn{1}{l}{Algorithm 1} & \multicolumn{1}{l}{Algorithm 2} & \multicolumn{1}{l}{Algorithm 3} & \multicolumn{1}{l}{True value} \\
     \midrule
    \textbf{$\mu$} & 1.9538 & 1.9417 & 1.8405 & 2 \\
    \textbf{$\beta_1$} & 3.0152 & 3.0327 & 3.1727 & 3 \\
    \textbf{$\beta_2$} & 0.0073 & 0.0029 & -0.2881 & 0 \\
    \textbf{$\beta_3$} & -3.9460 & -3.9205 & -3.9918 & -4 \\
    \textbf{time(s)} & 80.2314    & 20.0123    & 0.5587 &  \\
     \bottomrule
    \end{tabular}%
  \label{SimulatedResults2}%
\end{table}%



\newpage


\begin{thebibliography}{99}

\bibitem{DL2016} M. Abadi, A. Chu, I. Goodfellow, H. B. McMahan, I. Mironov, K. Talwar and L. Zhang. 2016.
 Deep learning with differential privacy.
 In Proceedings of the 2016 ACM SIGSAC Conference on Computer and Communications Security. 308-318.


\bibitem{HE1973} I. Barrodale and F. D. K. Roberts. 1973. An improved algorithm for discrete $l_1$ linear approximation.
Society for Industrial and Applied Mathematics Journal on Numerical Analysis. 839-848.

\bibitem{Tony2019} T. Cai, Y. Wang and L. Zhang. 2019. The cost of privacy: optimal rates of
convergence for parameter estimation with differential privacy. Under review: arXiv:1902.04495v3.

\bibitem{LR2009} K. Chaudhuri and C. Monteleoni. 2009. Privacy-preserving logistic regression. Proceedings of the 21st International Conference on Neural Information Processing Systems. 289-296.

\bibitem{Dwork2006}C. Dwork. 2006. Differential privacy. International colloquium on automata languages and programming. 1-12.
\bibitem{Dwork2014} C. Dwork and A. Roth. 2014. The algorithmic foundations of differential privacy.
Foundations and Trends in Theoretical Computer Science.
Vol. 9.  211-407.
\bibitem{Dwork2006C}C. Dwork, F. McSherry, K. Nissim and A. Smith. 2006. Calibrating noise to sensitivity in private data analysis. Proceedings of Theory of Cryptography Conference. 265-284
\bibitem{Google2014}   U. Erlingsson, V. Pihur and A. Korolova. 2014. Rappor: randomized
aggregatable privacy-preserving ordinal response. In Proceedings of the 2014 ACM SIGSAC
Conference on Computer and Communications Security. 1054-1067.

\bibitem{JH2007}J. Friedman, T. Hastie, H. H$\ddot{\text{o}}$fling and R. Tibshirani. 2007. Pathwise coordinate optimization. The Annals of Applied Statistics. 1(2). 302-332.

\bibitem{DR2000} D. R. Hunter and K. Lange. 2000. Quantile regression via an MM algorithm. Journal of Computational and Graphical Statistics. 9. 60-77.

\bibitem{DR2004} D. R. Hunter and K. Lange. 2004. A tutorial on MM Algorithms. The American Statistician. 58:1, 30-37
\bibitem{QR1978} R. Koenker and G. Bassett. 1978. Regression quantiles. Econometrica. 46. 33-50.
\bibitem{KD2011} D. Kifer and A. Machanavajjhala. 2011. No free lunch in data privacy. International conference on management of data. 193-204.
\bibitem{RML2001} R. Koenker and O. Geling. 2001.  Reappraising medfly longevity: a quantile regression survival analysis.
Journal of the American Statistical Association. 96. 458-468.

\bibitem{QR2001} R. Koenker and K. F. Hallock. 2001. Quantile regression. Journal of Economic Perspectives. 15. 143-156.


\bibitem{FM1993} K. Madsen and H. B. Nielsen. 1993. A finite smoothing algorithm for linear $l_1$ estimation.
SIAM Journal on Optimization. 3(2). 223-235.

\bibitem{S1973} E. J. Schlossmacher. 1973.
An iterative technique for absolute deviations curve fitting.
Journal of the American Statistical Association 68 (344). 857-859.
\bibitem{W2003}D. R. Wilson and T. Martinez. 2003. The general inefficiency of batch training for gradient descent learning. Neural Networks. 16(10). 1429-1451.
\bibitem{Wu2008} T. T. Wu and K. Lange. 2008. Coordinate descent algorithms for lasso
penalized regression.  The Annals of Applied Statistics. Vol. 2. 224-244.


\bibitem{Yang1999} S. Yang. 1999. Censored median regression using weighted empirical survival and hazard functions.
Journal of the American Statistical Association. 94. 137-145.



\end{thebibliography}
\end{document}